\documentclass[aps,pra,twocolumn,amsmath,amssymb,showpacs,floatfix]{revtex4-1}

\usepackage{times}

\usepackage[utf8x]{inputenc}
\usepackage[english]{babel}
\usepackage[T1]{fontenc}
\usepackage{lmodern}
\usepackage{amssymb}
\usepackage{bbold}
\usepackage{amsmath}
\usepackage{amsthm}
\usepackage[pdftex]{graphicx}
\usepackage{epstopdf}

\usepackage{dcolumn}
\usepackage{bm}
\usepackage{color}
\usepackage{relsize,dsfont,mathrsfs,empheq,verbatim,upgreek}
\usepackage{etoolbox}
\usepackage[caption = false]{subfig}
\captionsetup[subfigure]{labelformat=empty}

\usepackage{pgfplots}


\newtheorem{mydef}{Definition}
\newtheorem{mylem}{Lemma}
\newtheorem{mythe}{Theorem}

\providecommand{\norm}[1]{\lVert#1\rVert}

\begin{document}

\title{Generalized trace distance measure connecting quantum and classical non-Markovianity}

\author{Steffen Wi{\ss}mann and Heinz-Peter Breuer}
\affiliation{Physikalisches Institut, Universit\"at Freiburg, 
Hermann-Herder-Stra{\ss}e 3, D-79104 Freiburg, Germany}

\author{Bassano Vacchini}
\affiliation{Dipartimento di Fisica, Universit\`a degli Studi di Milano,
Via Celoria 16, I-20133 Milan, Italy}
\affiliation{INFN, Sezione di Milano, Via Celoria 16, I-20133 Milan, Italy}

 \begin{abstract}
  We establish a direct connection of quantum Markovianity of an open quantum system to its classical counterpart by generalizing the criterion based on the information flow. Here, the flow is characterized by the time evolution of Helstrom matrices, given by the weighted difference of statistical operators, under the action of the quantum dynamical evolution. It turns out that the introduced criterion is equivalent to P-divisibility of a quantum process, namely divisibility in terms of positive maps, which provides a direct connection to classical Markovian stochastic processes. Moreover, it is shown that similar mathematical representations as those found for the original trace distance based measure hold true for the associated, generalized measure for quantum non-Markovianity. That is, we prove orthogonality of optimal states showing a maximal information backflow and establish a local and universal representation of the measure. We illustrate some properties of the generalized criterion by means of examples.
  \end{abstract}
 
\pacs{03.65.Yz, 03.65.Ta, 03.67.-a, 02.50.-r}
\date{\today}
\maketitle

\section{Introduction}
\label{sec:introduction}
Even though experimental techniques are developing rapidly, perfect isolation of fragile quantum systems from noisy environments cannot be warranted in general, which necessitates effective descriptions for nonunitary dynamics \cite{Breuer2007}. A well-known treatment of the dynamics of such systems is provided by a quantum dynamical semigroup represented by a generator of the Gorini-Kossakowski-Sudarshan-Lindblad form \cite{Lindblad,GKS}. 

However, this description uses several rather drastic approximations which do not apply to open quantum systems in general. The mismatch is typically associated to the neglect of memory effects. Classically, there exists a well-established mathematical theory dealing with stochastic dynamics featuring memory effects, based on the theory of stochastic processes, yet the definition of classical Markovian stochastic processes cannot be straightforwardly carried over to the quantum regime. This led to an intense debate along with different proposal for the characterization and quantification of memory effects in the dynamics of open quantum systems. Among others, approaches to non-Markovianity were based on deviation from semigroup dynamics \cite{Wolf}, on divisibility in terms of completely positive maps \cite{RHP}, on dynamics of entanglement \cite{RHP} and correlations \cite{Luo}, and on the Fisher information \cite{Sun}, see \cite{RHPreview} for a review.

In this work we focus on a straightforward extension of the measure of non-Markovianity introduced in Ref. \cite{BLP,BLP2} and reviewed in \cite{HPB2012}, which characterizes memory effects by an exchange of information between the open quantum system and the environment. This interpretation applies also to the generalized measure that uses Helstrom matrices known from quantum estimation theory \cite{Helstrom}. A similar approach has already been introduced in Ref. \cite{ChruscinskiRivas} focusing, at variance with the present proposal, on divisibility in terms of completely positive maps (CP-divisibility). However, CP-divisibility yields only a sufficient condition for quantum Markovianity with respect to the novel characterization proposed in this article, which turns out to be equivalent to divisibility in terms of positive maps (P-divisibility), and has in addition a clear-cut connection to classical Markov stochastic processes. The generalized criterion introduced in Section \ref{sec:information-flow-p} thus combines a physical interpretation of non-Markovianity in terms of an information flow and a relation to the well-known classical definition.

Moreover, the associated measure has similar mathematical features and representations as those found for the original definition \cite{LocalRep,OptimalStates} which simplify its analytical, numerical and experimental determination drastically. Indeed we prove in Section \ref{sec:expr-gener-non} that, first, optimal initial states for non-Markovian dynamics, experiencing a maximal backflow of information, must be orthogonal and, second, the measure admits a local representation showing locality and universality of quantum memory effects. 

An illustration of the essential features of the generalized characterization of non-Markovianity in comparison with the original definition and approaches based on CP-divisibility is provided in Section \ref{sec:examples}. The examples show the sensitivity of the novel approach to memory effects of dynamics to which the original definition was unsusceptible and illustrate the existing difference between different types of divisibility of a dynamical process. Finally, we summarize our results in Section \ref{sec:conclusions-outlook}.

\section{State discrimination and P-divisibility}
\label{sec:information-flow-p}

Henceforth, $\mathcal{H}$ refers to the Hilbert space of the open quantum system and the corresponding set of physical states, i.e., the set of positive trace class operators with unit trace, is denoted by $\mathcal{S}(\mathcal{H})$. Moreover, we assume that the dynamics of the open quantum system is determined by a one-parameter family $\Phi=\{\Phi_t|0\leq t\leq T\}$ of completely positive, trace preserving linear maps $\Phi_t$, where $\Phi_0$ refers to the identity \cite{Breuer2007}.

\subsection{Ensemble discrimination}
\label{sec:ensemble-discr}
The previously introduced definition of quantum non-Markovianity \cite{BLP,BLP2}, which was based on the concept of an information flow between system and environment, relied on the trace distance between two quantum states $\rho_1$ and $\rho_2$ defined by \cite{Nielsen}
\begin{equation}\label{eq:TraceDist}
\mathcal{D}(\rho_1,\rho_2)\equiv\tfrac{1}{2}\norm{\rho_1-\rho_2}_1~.
\end{equation}
Here $\norm{A}_1=\mathrm{Tr}|A|$, where the modulus of an operator is given by $|A|=\sqrt{A^\dag A}$, refers to the so-called trace norm on the set of trace class operators \cite{Hayashi}. A natural and interesting generalization of this quantity is obtained when one also allows for biased probability distributions. That is, one considers 
\begin{equation}\label{eq:TnormHelstrom}
 \norm{p_1\rho_1-p_2\rho_2}_1=\mathrm{Tr}|p_1\rho_1-p_2\rho_2|~,
\end{equation}
where $\{p_i\}$ refers to an arbitrary binary probability distribution, i.e. $p_{1,2}\geq0$ and $p_1+p_2=1$. Clearly, choosing an unbiased distribution, i.e. $p_{1,2}=1/2$, one obtains the original expression \eqref{eq:TraceDist}. This generalization was first proposed in this context by Chru\'sci\'nski \emph{et al.} \cite{ChruscinskiRivas}. 

The Hermitian operator $\Delta= p_1\rho_1-p_2\rho_2$ is also known as Helstrom matrix \cite{Helstrom,Holevo} and we find, applying the triangle inequality to Eq.\,\eqref{eq:TnormHelstrom},
\begin{equation}\label{eq:boundHelstrom}
 |p_1-p_2|\leq\norm{\Delta}_1\leq p_1+p_2=1~.
\end{equation}
Clearly, the lower bound is obtained for $\rho_1=\rho_2$ whereas the upper bound is attained if and only if $\rho_1\perp\rho_2$ meaning that the two states have orthogonal support (which is defined as the subspace spanned by the eigenvectors with nonzero eigenvalues). This is easily shown by means of the characterization of the trace norm
\begin{align}\label{eq:TnormHelstrom2}
 \norm{\Delta}_1&=2\max_\Pi \mathrm{Tr}\{\Pi\Delta\}+p_2-p_1~,
\end{align}
where the maximum is taken over all projection operators $\Pi$. Here, Eq.\,\eqref{eq:TnormHelstrom2} is derived employing the Jordan-Hahn decomposition of the Hermitian Helstrom matrix $\Delta$ in terms of two positive and orthogonal operators, i.e. $\Delta=S-Q$ where $S,Q\geq0$ and $S\perp Q$.

The interpretation of the trace distance of two states $\rho_{1,2}\in\mathcal{S}(\mathcal{H})$ as a measure for their distinguishability directly carries over to the trace norm of a Helstrom matrix $\Delta$. Consider a one-shot, two state discrimination problem where Alice prepares one out of two quantum states $\rho_{1,2}$ with corresponding probability $p_{1,2}$, i.e., we have $p_1+p_2=1$. Hence, we allow for the general situation of a biased preparation by Alice. She finally sends the prepared state to Bob who performs a single (strong) measurement to infer which state he had received \cite{Hayashi}. To guess the state from the measurement with possible outcomes $\Omega$, Bob defines two sets of possible results, $R$ and $\Omega\backslash R$, and assigns the state to be $\rho_1$ if the measurement outcome is in $R$ and $\rho_2$ if a value in $\Omega\backslash R$ is obtained. This strategy results in an effective two-valued positive operator valued measure (POVM) $\{T_R,\mathbb{1}-T_R\}$ where $T_R$ refers to the collection of effects corresponding to outcomes in $R$. The probability for correct state discrimination via this strategy is then given by
\begin{align}\label{eq:SuccProb}
 P_\mathrm{success}&=p_1\mathrm{Tr}\{T_R \rho_1\}+ p_2\mathrm{Tr}\{(\mathbb{1}-T_R) \rho_2\}\nonumber\\
&=p_2+\mathrm{Tr}\{\Delta T_R\}
\end{align}
which is maximal if $T_R$ is the projection $\Pi_{\{\Delta\geq0\}}$ on the subspace spanned by eigenvectors corresponding to positive eigenvalues of $\Delta$. Employing that $\mathrm{Tr}|X|=\mathrm{Tr}\{X(\Pi_{\{X\geq0\}}-\Pi_{\{X<0\}})\}$ for any Hermitian operator $X$ one shows that the maximal success probability for correct discrimination obeys \cite{Hayashi}

\begin{align}\label{eq:maxSuccProb}
 P_\mathrm{success}^\mathrm{max}=\max_{0\leq T_R\leq\mathbb{1}} P_\mathrm{success}=\frac{1}{2}\left(1+\norm{\Delta}_1\right)~.
\end{align}
Hence, the trace norm of the Helstrom matrix $\Delta=p_1\rho_1-p_2\rho_2$ is the bias in favor of the correct state identification of the state prepared by Alice so that it may be interpreted as a measure for the distinguishability of the two states $\rho_1$ and $\rho_2$ with associated weights $p_1$ and $p_2$.

Due to a result by Kossakowski \cite{Kossakowski,Kossakowski2} the trace norm can also be used to witness positivity of a trace preserving map $\Lambda$. That is, a trace preserving map $\Lambda$ is positive if and only if it defines a contraction for any Hermitian operator $X$ with respect to the trace norm, i.e.,
\begin{equation}\label{eq:contraction}
 \norm{\Lambda X}_1\leq\norm{X}_1~,~~\text{for all}~X=X^\dag~.
\end{equation}
We thus obtain for two states evolving according to a dynamical map $\Phi_t$
\begin{equation}\label{eq:contraction2}
 \norm{p_1\Phi_t(\rho_1)-p_2\Phi_t(\rho_2)}_1\leq\norm{p_1\rho_1-p_2\rho_2}~,
\end{equation}
for all $t$ where we also used linearity of the map. 

Adopting the previous characterization for quantum non-Markovianity \cite{BLP,BLP2} one then directly derives a generalized criterion which still relies on the concept of an information flow:
\begin{mydef}\label{def:Markov}
 A quantum process $\Phi$ is defined to be Markovian if $\norm{p_1\Phi_t(\rho_1)-p_2\Phi_t(\rho_2)}_1$ is a monotonically decreasing function of $t\geq0$ for all sets $\{p_i,\rho_i\}$ with $p_i\geq0$, $p_1+p_2=1$ and $\rho_i\in\mathcal{S}(\mathcal{H})$.
\end{mydef}

We stress that the authors of Ref. \cite{ChruscinskiRivas} introduced a different definition for Markovianity, which used the Helstrom matrices of a dilated system making it equivalent to CP-divisibility of the dynamical process. However, as we shall see in Section \ref{sec:dist-divis}, Markovianity as defined here can be related to the concept of P-divisibility and provides a clear-cut connection to classical Markov stochastic processes.

We conclude this section by showing that the previously developed explanation to substantiate the interpretation of quantum memory as an information backflow from the environment to the system \cite{ReviewHP} also applies here. To this end, we define the quantities
\begin{align}\label{eq:intInfo}
 \mathcal{I}_\mathrm{int}(t)&=\norm{p_1\rho_S^{(1)}(t)-p_2\rho_S^{(2)}(t)}_1~, \\
\intertext{and}
 \mathcal{I}_\mathrm{ext}(t)&=\norm{p_1\rho_{SE}^{(1)}(t)-p_2\rho_{SE}^{(2)}(t)}_1-\mathcal{I}_\mathrm{int}(t)~,
\label{eq:extInfo}
\end{align}
where $\rho_{SE}^{(i)}(t)=U_t\rho_S^{(i)}\otimes\rho_E U_t^\dag$ refers to the state of system and environment at time $t$ subject to a unitary evolution $U_t$. We thus have $\rho_S^{(i)}(t)=\mathrm{Tr}_E\rho_{SE}^{(i)}(t)$ and, similarly, the state of the environment at time $t$ is given by $\rho_E^{(i)}(t)=\mathrm{Tr}_S\rho_{SE}^{(i)}(t)$.

As discussed in \cite{ReviewHP} $\mathcal{I}_\mathrm{int}(t)$ describes the distinguishability of the open system at time $t$, while $\mathcal{I}_\mathrm{ext}(t)$ can be interpreted as the information on the total system which is not accessible by measurements on the open system only. Due to contractivity of the trace norm under positive and trace preserving maps (cf. Eq.\,\eqref{eq:contraction}) we conclude that both quantities are positive as the partial trace is (completely) positive and preserves the trace. Moreover, existence of the quantum dynamical map implies that the initial state is factorized, therefore $\mathcal{I}_\mathrm{ext}(0)=0$, so that unitary invariance of the trace norm implies
\begin{equation}
 \mathcal{I}_\mathrm{int}(t)+\mathcal{I}_\mathrm{ext}(t)=\mathcal{I}_\mathrm{int}(0)=\mathrm{const}~.
\end{equation}
This relation clearly expresses the idea of exchange of information between the open system and the environment as an increase of $\mathcal{I}_\mathrm{ext}(t)$ necessarily forces $\mathcal{I}_\mathrm{int}(t)$ to decrease. One may also derive upper bounds for the external information which, however, do not allow for a clear interpretation in terms of 
formation of correlations between system and environment and changes in the environmental states as proven for the trace distance in Ref. \cite{EPL-Corr}.

\subsection{Distinguishability and divisibility}
\label{sec:dist-divis}
In this section we introduce the concept of divisibility of a dynamical map which has been at the basis of several other proposals for quantum non-Markovian dynamics \cite{RHP,ChruscinskiRivas} and elucidate its connection to the distinguishability measure defined before.

We start by revisiting the notion of $n$-positivity. A map $\Lambda$ on $\mathcal{H}$ is said to be $n$-positive if and only if the linear tensor extension of the map given by $(\Lambda\otimes\mathbb{1}_n)(A\otimes B)=\Lambda(A)\otimes B$ for linear operators $A$ on $\mathcal{H}$ and $B$ on $\mathbb{C}^n$ is positive, that is, $(\Lambda\otimes\mathbb{1}_n)C\geq0$ for all positive linear operators $C$ on $\mathcal{H}\otimes\mathbb{C}^n$.

Clearly, $1$-positivity is equivalent to the map $\Lambda$ being positive and the concept of $n$-positivity is hierarchical, which means that $n$-positivity of $\Lambda$ implies that the map is also $k$-positive for all $1\leq k\leq n$ \cite{Blackadar}. The converse, however, is not true in general \cite{Choi,Stinespring}. Finally, one speaks about complete positivity if a map is $n$-positive for all $n\in\mathbb{N}$. For a finite-dimensional system with $\mathrm{dim}\mathcal{H}=N$ this property is equivalent to $N$-positivity \cite{Choi2}.

If the maps $\Phi_t$ comprising a dynamical process are invertible for all $t\geq0$ with inverse $\Phi_t^{-1}$, then one can define a two-parameter family of maps given by
\begin{equation}
 \Phi_{t,s}=\Phi_t\Phi_s^{-1}~,
\end{equation}
for all $t\geq s\geq0$ such that $\Phi_{t,0}=\Phi_t$ and $\Phi_t=\Phi_{t,s}\Phi_s$. We remark that although the existence of a left-inverse requiring injectivity of the maps $\Phi_t$ would be sufficient, we assume $\Phi_t^{-1}$ to be the left- and right-inverse. The notion of divisibility deals with the properties concerning $n$-positivity of these maps, i.e., 
\begin{mydef}\label{def:P-divisible}
A dynamical process $\Phi$ is called P-divisible (CP-divisible) if $\Phi_{t,s}$ is a positive (completely positive) map for all $t\geq s\geq0$.
\end{mydef}
Even though $\Phi_t$ is completely positive for all times by definition, this does not imply (complete) positivity of $\Phi_{t,s}$ in general as the inverse of a CP-map need not be positive showing that the above definition is nontrivial. However, the concept of divisibility relies on the existence of the two-parameter family which is limited to processes for which $\Phi_t^{-1}$ exists for all times. This property though cannot be taken for granted as e.g. the damped Jaynes-Cummings model on resonance \cite{Breuer2007,BLP2,ReviewHP} or examples on quantum semi-Markov processes \cite{SemiMarkov} show, thus making the concept of divisibility sometimes ill-defined. 

Typically, however, the inverse maps exist apart from isolated points in time so that one may describe the dynamical process by a time-local master equation for the open system during intermediate intervals. Assuming a sufficiently smooth time dependence, the generator obeys $\mathcal{K}_t=\dot{\Phi}_t\Phi_t^{-1}$ and has the general structure
\begin{align}\label{eq:MQ}
\mathcal{K}_t\rho_S(t)=&-i\left[H_S(t),\rho_S(t)\right]+\sum_j \gamma_j(t)\Bigl(A_j(t)\rho_S(t)A_j^\dag(t)\nonumber\\
&-\frac{1}{2}\bigl\{A_j^\dag(t)A_j(t),\rho_S(t)\bigr\}\Bigr)~,
\end{align}
which is similar to the well-known Lindblad form \cite{GKS,Lindblad} apart from the fact that the system Hamiltonian $H_S(t)$, the rates $\gamma_j(t)$ and the Lindblad operators $A_j(t)$ may depend on time as the process might not represent a semigroup. 

The maps of the two-parameter family are then given by
\begin{equation}
 \Phi_{t,s}=\mathrm{T}_\leftarrow \exp\left[\int_s^t\mathrm{d}\tau \mathcal{K}_\tau\right]~,
\end{equation}
where $\mathrm{T}_\leftarrow$ refers to chronological time-ordering. We thus have $\mathcal{K}_s=\frac{\mathrm{d}}{\mathrm{d}t}\Phi_{t,s}\mid_{t=s}$ so that we conclude: $\Phi$ is (C)P-divisible if and only if $\mathcal{K}_t$ is the generator of a (completely) positive semigroup for all fixed $t\geq0$. 

The famous Gorini-Kossakowski-Sudarshan-Lindblad theorem \cite{GKS,Lindblad} and a result on generators of positive semigroups by Kossakowski \cite{Kossakowski} then allow for a characterization of CP- and P-divisibility in terms of the generator \cite{ReviewHP}:

\begin{mythe}\label{the:Divisibility}
The dynamics generated by $\mathcal{K}_t$ \eqref{eq:MQ}
 \begin{enumerate}
\item  is CP-divisible if and only if $\gamma_j(t)\geq0$ holds for all $j$ and $t\geq0$. 
\item  is P-divisible if and only if for all $n\neq m$
\begin{equation}\label{eq:CondPdiv}
 \sum_{j}\gamma_j(t)|\langle m|A_j(t)|n\rangle|^2\geq0~,
\end{equation}
holds for any orthonormal basis $\{|n\rangle\}$ of $\mathcal{H}$ and all $t\geq0$. 
 \end{enumerate}
\end{mythe}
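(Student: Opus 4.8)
The plan is to exploit the reduction established immediately before the theorem statement, namely that the process $\Phi$ is (C)P-divisible if and only if, for each fixed $t\geq0$, the frozen generator $\mathcal{K}_t$ is the generator of a (completely) positive one-parameter semigroup. Once this reduction is granted, each item becomes a direct application of a known structural characterization of such generators, applied pointwise in $t$ with $H_S(t)$, the rates $\gamma_j(t)$, and the operators $A_j(t)$ treated as constants. So the whole argument amounts to transcribing two classical theorems through this equivalence.

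For the first item I would simply invoke the Gorini-Kossakowski-Sudarshan-Lindblad theorem. Since $\mathcal{K}_t$ in \eqref{eq:MQ} is already written in canonical Lindblad form with a diagonal Kossakowski matrix whose entries are the rates $\gamma_j(t)$, that theorem asserts that the frozen generator generates a completely positive, trace-preserving semigroup precisely when this matrix is positive semidefinite, i.e.\ when $\gamma_j(t)\geq0$ for all $j$. Combined with the reduction, this yields CP-divisibility if and only if $\gamma_j(t)\geq0$ for all $j$ and all $t\geq0$.

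For the second item I would appeal to Kossakowski's characterization of generators of merely positive semigroups. The geometric idea is that positivity is preserved under $e^{\tau\mathcal{K}_t}$ exactly when the flow does not push a state across the boundary of the positive cone; the extremal boundary configurations are a pure state $|n\rangle\langle n|$ together with a vector $|m\rangle$ orthogonal to $|n\rangle$, so the criterion reads $\langle m|\mathcal{K}_t(|n\rangle\langle n|)|m\rangle\geq0$ for every such orthogonal pair, equivalently for $n\neq m$ ranging over any orthonormal basis. The substantive step is then to evaluate this matrix element for the explicit generator. One checks that the Hamiltonian commutator and the anticommutator term both vanish, since each contains a factor $\langle n|m\rangle$ or $\langle m|n\rangle$ that is zero for $m\neq n$, leaving only the jump term $A_j|n\rangle\langle n|A_j^\dag$, whose contribution is $\sum_j\gamma_j(t)|\langle m|A_j(t)|n\rangle|^2$. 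This reproduces condition \eqref{eq:CondPdiv} in both directions.

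I expect the main obstacle to be conceptual rather than computational: stating and justifying Kossakowski's positivity criterion in the form used above, in particular the equivalence between "for all orthogonal unit vectors" and "for $n\neq m$ in any orthonormal basis," and the fact that these extremal boundary configurations suffice to detect loss of positivity. The matrix-element calculation itself is routine once the criterion is fixed, the only care being to confirm that the unitary part drops out for all off-diagonal entries so that the final condition involves the dissipative rates alone.
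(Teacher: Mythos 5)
Your proposal is correct and follows essentially the same route as the paper: item 1 is the Gorini-Kossakowski-Sudarshan-Lindblad theorem applied pointwise to the frozen generator, and item 2 is Kossakowski's positivity criterion together with the observation that the Hamiltonian and anticommutator terms drop out of the off-diagonal matrix elements $\langle m|\mathcal{K}_t(|n\rangle\langle n|)|m\rangle$, leaving $\sum_j\gamma_j(t)|\langle m|A_j(t)|n\rangle|^2$. The one obstacle you flag — justifying the criterion in the rank-one, off-diagonal form — is resolved in the paper by quoting Kossakowski's result for general resolutions of the identity $\{\Pi_m\}$ (conditions $a_{mm}(\Pi)\leq 0$, $a_{mn}(\Pi)\geq 0$ for $m\neq n$, and $\sum_m a_{mn}(\Pi)=0$) and then noting that trace preservation makes the sum condition automatic, which in turn makes the diagonal condition redundant, while linearity reduces the remaining off-diagonal condition to rank-one projections, i.e.\ to orthonormal bases.
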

\begin{proof}
 The first statement is precisely the Gorini-Kossakowski-Sudarshan-Lindblad theorem \cite{GKS,Lindblad} and the second can be derived from Kossakowski's result on generators of positive semigroups \cite{Kossakowski} which states: A dynamics generated by $\mathcal{L}$ is P-divisible if and only if for any set of projections $\Pi=\{\Pi_m\}_{m\in I}$ defining a resolution of the identity, i.e. $\sum_m \Pi_m=\mathbb{1}_\mathcal{H}$, the following relations
\begin{align}
a_{mm}(\Pi)&\leq0~,~~ m\in I~,\label{eq:Cond1}\\
a_{mn}(\Pi)&\geq0~,~~ m\neq n\in I~,\label{eq:Cond2}\\
\sum_{m\in I}a_{mn}(\Pi)&=0~,~~ n\in I~.\label{eq:Cond3}
\end{align}
for $a_{mn}(\Pi)\equiv\mathrm{Tr}\{\Pi_m(\mathcal{L}\Pi_n)\}$ are satisfied.

It is easily seen that condition \eqref{eq:Cond3} refers to the preservation of trace which is always met for the generator $\mathcal{K}_t$ \eqref{eq:MQ} by its very structure. Rearranging terms of Eq.\,\eqref{eq:Cond3} one then finds
\begin{equation}
 a_{nn}(\Pi)=-\sum_{m\neq n}a_{mn}(\Pi)~,
\end{equation}
for all $n\in I$. The constraints thus reduce to the single relation \eqref{eq:Cond2} which can additionally be restricted to sets of rank-one projections $\Pi$ by virtue of linearity. Evaluating $a_{mn}(\Pi)$ for $m\neq n$ and rank-one projections associated to an orthonormal basis $\{|n\rangle\}$ of $\mathcal{H}$ one finds
\begin{align}\label{eq:amn}
 a_{mn}(\Pi)&=\mathrm{Tr}\left\{|m\rangle\langle m| \mathcal{K}_t(|n\rangle\langle n|)\right\}\nonumber\\
&=\sum_{j}\gamma_j(t)|\langle m|A_j(t)|n\rangle|^2~,
\end{align}
which is condition \eqref{eq:CondPdiv}.
\end{proof}
Obviously, the conditions for P- and CP-divisibility coincide for a master equation with a single decay channel. We may alternatively characterize P-divisibility employing the contraction property \eqref{eq:contraction} and link it to quantum Markovian behavior defined above (see definition \ref{def:Markov}). To this end, we first prove the following lemma.

\begin{mylem}\label{lem:Helstrom}
For any Hermitian operator $X\neq 0$, there exists a real number $\lambda >0$ and a Helstrom matrix $\Delta$ such that $X=\lambda\Delta$.
\end{mylem}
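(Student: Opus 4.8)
The plan is to invoke the Jordan-Hahn decomposition of the Hermitian operator $X$ and then renormalize its positive and negative parts into genuine density operators, absorbing the total weight into the scalar $\lambda$. Concretely, I would write $X=X_+-X_-$ with $X_\pm\geq0$ and $X_+\perp X_-$, exactly the decomposition $\Delta=S-Q$ used above to derive Eq.\,\eqref{eq:TnormHelstrom2}. Setting $\lambda_\pm=\mathrm{Tr}(X_\pm)\geq0$, the natural candidate for the scalar is $\lambda=\lambda_++\lambda_-=\mathrm{Tr}|X|=\norm{X}_1$, which is strictly positive precisely because $X\neq0$.

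Next I would define $p_1=\lambda_+/\lambda$ and $p_2=\lambda_-/\lambda$, which by construction satisfy $p_{1,2}\geq0$ and $p_1+p_2=1$, together with $\rho_1=X_+/\lambda_+$ and $\rho_2=X_-/\lambda_-$. Each $\rho_i$ is then a positive trace-class operator of unit trace, hence an element of $\mathcal{S}(\mathcal{H})$, and a one-line computation gives $\lambda(p_1\rho_1-p_2\rho_2)=X_+-X_-=X$, so that $\Delta=p_1\rho_1-p_2\rho_2$ is the required Helstrom matrix with $X=\lambda\Delta$.

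The only point that needs care---and the sole genuine obstacle---is the degenerate situation in which $X$ is semidefinite, so that one of $X_+,X_-$ vanishes and the corresponding $\rho_i$ is formally $0/0$. In that case the associated weight $p_i$ is already zero, so the term $p_i\rho_i$ drops out of the Helstrom matrix irrespective of the choice of $\rho_i$; I would simply fix $\rho_i$ to be an arbitrary state in $\mathcal{S}(\mathcal{H})$ to keep the definition well-posed. This finishes the construction and shows the decomposition is always available, which is exactly what the subsequent characterization of P-divisibility via the contraction property \eqref{eq:contraction} will require.
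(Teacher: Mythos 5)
Your proof is correct and follows essentially the same route as the paper: a Jordan--Hahn decomposition of $X$, normalization of the two parts into states, weights given by the normalized traces, and $\lambda=\mathrm{Tr}|X|$. The only cosmetic difference is that the paper treats the semidefinite case separately up front, whereas you fold it into the general construction as the degenerate case $p_i=0$ with an arbitrary $\rho_i$, which is equally valid.
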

\begin{proof}
 Let $X=X^\dag\neq0$ be given. If $X\geq0$, then $\rho_1=(\mathrm{Tr}X)^{-1}X$ defines a state so that for $\lambda=\mathrm{Tr}X$ the Helstrom matrix characterized by $p_1=1$, $p_2=0$ with arbitrary $\rho_2$ proves the claim. Similarly for $X\leq0$.

Hence, suppose that $X$ is indefinite. Employing the Jordan-Hahn decomposition we thus find nonzero operators $Y_{1,2}\geq0$ where $Y_1\perp Y_2$ such that $X=Y_1-Y_2$ and, therefore, $\mathrm{Tr}|X|=\mathrm{Tr}Y_1+\mathrm{Tr}Y_2>0$. Clearly, the operators $\rho_i=(\mathrm{Tr}Y_i)^{-1}Y_i$ define states and we have
\begin{equation}
 X=\lambda(p_1\rho_1-p_2\rho_2)~,
\end{equation}
for $\lambda=\mathrm{Tr}|X|$ and $p_i=\lambda^{-1}\mathrm{Tr}Y_i$. These quantities are indeed positive and sum to one, thus, representing a probability distribution which concludes the proof. 
\end{proof}
Since dynamical maps $\Phi_t$ as well as the trace norm are homogeneous (with respect to positive numbers) it thus suffices to apply the characterization of positivity \eqref{eq:contraction} to Helstrom matrices which yields:
\begin{mythe}
 If the dynamical maps defining a process $\Phi$ are bijective, then $\Phi$ is Markovian if and only if it is P-divisible.
\end{mythe}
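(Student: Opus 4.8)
The plan is to derive both implications from the Kossakowski contraction criterion \eqref{eq:contraction}, which characterizes positivity of a trace-preserving map $\Lambda$ by the property $\norm{\Lambda X}_1 \le \norm{X}_1$ for all Hermitian $X$, together with Lemma~\ref{lem:Helstrom} and the homogeneity of the trace norm and of the linear maps $\Phi_t$. Throughout I use that $\Phi_{t,s} = \Phi_t\Phi_s^{-1}$ is trace preserving: since $\Phi_s$ preserves the trace, so does $\Phi_s^{-1}$, and hence so does the composition. Thus \eqref{eq:contraction} is the natural bridge between the positivity of $\Phi_{t,s}$ (i.e.\ P-divisibility, Definition~\ref{def:P-divisible}) and the monotone trace-norm behavior demanded by Definition~\ref{def:Markov}.

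For the direction P-divisible $\Rightarrow$ Markovian I would argue directly. Assuming $\Phi_{t,s}$ is positive and trace preserving for all $t\ge s\ge 0$, Eq.~\eqref{eq:contraction} gives $\norm{\Phi_{t,s}X}_1\le\norm{X}_1$ for every Hermitian $X$. Applying this to $X=p_1\Phi_s(\rho_1)-p_2\Phi_s(\rho_2)$ and using linearity, $\Phi_{t,s}X=p_1\Phi_t(\rho_1)-p_2\Phi_t(\rho_2)$, so $\norm{p_1\Phi_t(\rho_1)-p_2\Phi_t(\rho_2)}_1\le\norm{p_1\Phi_s(\rho_1)-p_2\Phi_s(\rho_2)}_1$ for all $t\ge s$ and all $\{p_i,\rho_i\}$. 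This is exactly the monotone decrease of Definition~\ref{def:Markov}.

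For the converse Markovian $\Rightarrow$ P-divisible I would fix $t\ge s$ and show $\Phi_{t,s}$ is positive by verifying $\norm{\Phi_{t,s}X}_1\le\norm{X}_1$ for every Hermitian $X$ via \eqref{eq:contraction}. The crucial step exploits bijectivity: for $X\neq 0$ the operator $\Phi_s^{-1}X$ is again Hermitian (since $\Phi_s$ preserves Hermiticity and is injective) and nonzero, so Lemma~\ref{lem:Helstrom} writes it as $\Phi_s^{-1}X=\mu(q_1\tau_1-q_2\tau_2)$ with $\mu>0$ and genuine states $\tau_{1,2}$. Markovianity applied to $\tau_{1,2}$ with weights $q_{1,2}$ gives $\norm{q_1\Phi_t(\tau_1)-q_2\Phi_t(\tau_2)}_1\le\norm{q_1\Phi_s(\tau_1)-q_2\Phi_s(\tau_2)}_1$; multiplying by $\mu$ and using linearity and homogeneity yields $\norm{\Phi_t(\Phi_s^{-1}X)}_1\le\norm{\Phi_s(\Phi_s^{-1}X)}_1$, that is $\norm{\Phi_{t,s}X}_1\le\norm{X}_1$. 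The case $X=0$ is trivial.

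The main obstacle is precisely this converse direction, and more specifically the temptation to represent an arbitrary Helstrom matrix $\Delta$ sitting at time $s$ as $p_1\Phi_s(\rho_1)-p_2\Phi_s(\rho_2)$ by pulling its constituent states back through $\Phi_s^{-1}$. This naive route fails because $\Phi_s^{-1}$ need not be positive, so $\Phi_s^{-1}(\sigma_i)$ need not be a legitimate state, whereas Definition~\ref{def:Markov} only supplies monotonicity for genuine states in $\mathcal{S}(\mathcal{H})$. The resolution is to apply Lemma~\ref{lem:Helstrom} to the \emph{whole} pulled-back Hermitian operator $\Phi_s^{-1}X$ rather than to its separate pieces: this manufactures a bona fide decomposition into states $\tau_{1,2}$ with a probability distribution $q_{1,2}$, to which Markovianity genuinely applies, and the positive scalar $\mu$ then cancels by homogeneity.
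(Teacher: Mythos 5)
Your proof is correct and takes essentially the same route as the paper's: the forward direction applies the Kossakowski contraction property \eqref{eq:contraction} to evolved Helstrom matrices, and the converse uses bijectivity to pull an arbitrary Hermitian operator back through $\Phi_s^{-1}$ and then invokes Lemma~\ref{lem:Helstrom} to write it as a positive multiple of a Helstrom matrix, exactly as the paper intends. Your version merely spells out the pull-back and homogeneity steps that the paper's converse leaves terse.
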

\begin{proof}
 We first note that $p_1\Phi_t(\rho_1)-p_2\Phi_t(\rho_2)=\Phi_t(\Delta)$ holds for any probability distribution $\{p_i\}$ and pair of states $\rho_i$. Hence, it suffices to consider the time evolution of Helstrom matrices when studying quantum Markovianity. Suppose $\Phi$ is P-divisible. It follows that
\begin{equation}
 \norm{\Phi_t(\Delta)}_1=\norm{\Phi_{t,s}(\Phi_s(\Delta))}_1\leq\norm{\Phi_s(\Delta)}_1~,
\end{equation}
for all $t\geq s\geq0$ and Helstrom matrices $\Delta$ due to positivity of $\Phi_{t,s}$. Hence, $\norm{\Phi_t(\Delta)}_1$ is a monotonically decreasing function of time for any $\Delta$ showing that the process $\Phi$ is Markovian.

For the converse, we first note that the inverse map $\Phi_t^{-1}$ exists for all $t$ as the dynamical maps are bijective on the set of Hermitian operators by assumption. Thus, the maps $\Phi_{t,s}$ exists for all $t\geq s\geq 0$. Now, let the process be Markovian, i.e., $\norm{\Phi_t(\Delta)}_1\leq \norm{\Phi_s(\Delta)}_1$ for all $t\geq s\geq 0$ and Helstrom matrices $\Delta$. We may rewrite this as
\begin{align}
\norm{\Phi_{t,s}(\Phi_s(\Delta))}_1\leq\norm{\Phi_s(\Delta)}_1~.
\end{align}
from which positivity of $\Phi_{t,s}$ follows according to Eq.\,\eqref{eq:contraction} employing lemma \ref{lem:Helstrom} and the fact that $\Phi_t$ is bijective for all $t$. Hence, $\Phi$ is P-divisible. 
\end{proof}

\subsection{Connection to classical Markovian stochastic processes}
\label{sec:conn-class-mark}

The definition of Markovianity stated above is thus equivalent to P-divisibility of a dynamical process $\Phi$ if this notion is well-defined at all. However, the measure for non-Markovianity \eqref{eq:newM} may be evaluated for any dynamical process showing its great benefit. Similarly, one could obtain equivalence of CP-divisibility with Markovian behavior if the dilated process $\Phi\otimes\mathbb{1}_\mathcal{H}$ is considered as shown in Ref. \cite{ChruscinskiRivas}. However, P-divisible quantum processes offer the relevant feature of a distinct connection to classical Markovian stochastic processes. To show this, we employ the characterization of P-divisible processes given in theorem \ref{the:Divisibility}. Let $\rho(t)$ be the solution of the master equation
\begin{equation}\label{eq:MEq}
 \frac{\mathrm{d}}{\mathrm{d}t}\rho(t)=\mathcal{K}_t\rho(t)~,
\end{equation}
with initial state $\rho(0)$ where $\mathcal{K}_t$ is given by Eq.\,\eqref{eq:MQ}. The time-evolved state admits an instantaneous spectral decomposition,
\begin{align}
\rho(t)=\sum_m p_m(t)|\phi_m(t)\rangle\langle\phi_m(t)|~,
\end{align}
so that $\{|\phi_m(t)\rangle\}$ defines an orthonormal basis on $\mathcal{H}$ and $\{p_m(t)\}$ represents a classical probability distribution for all $t\geq0$. By virtue of the orthonormality the eigenvalues $p_m(t)=\langle \phi_m(t)|\rho(t)|\phi_m(t)\rangle$ obey the following closed differential equation:
\begin{align}\label{eq:PauliMQ1}
 \frac{\mathrm{d}}{\mathrm{d}t}p_m(t)=&\sum_{n}
 \Big[W_{mn}(t)p_n(t)-W_{nm}(t)p_m(t)\Big]~,
\end{align}
where 
\begin{equation}\label{eq:transRate}
 W_{mn}(t)=\sum_j \gamma_j(t) |\langle\phi_m(t)|A_j(t)|\phi_n(t)\rangle|^2~.
\end{equation}
Obviously, the term in Eq.\,\eqref{eq:PauliMQ1} with $m=n$ drops out. Given any solution of a quantum master equation, one thus obtains a classical jump process. We emphasize that the rates $W_{mn}(t)$ depend in general on the initial eigenbasis $\{|\phi_m(0)\rangle\}$ as well as the initial probability distribution $\{p_m(0)\}$.

According to the theory of classical stochastic processes, any hierarchy of $n$-point probability distributions $P_n(y_n,t_n;\dots;y_1,t_1)$ with discrete sample space $\Omega$ satisfying the consistency relations
 \begin{align}
 &P_n(y_n,t_n;\dots;y_1,t_1)\geq0~,\label{eq:hier1}\\
 \sum_{y_1\in\Omega} &P_1(y_1,t_1)=1~,\label{eq:hier3}\\
 \sum_{y_m\in\Omega} &P_n(y_n,t_n;\dots;y_m,t_m;\dots;y_1,t_1)\nonumber\\
&=P_{n-1}(y_n,t_n;\dots;y_1,t_1)~,\label{eq:hier2}
 \end{align}
for all $1\leq m\leq n$ and $0\leq t_1<t_2<\dots<t_n$, determines a stochastic process $\mathcal{Y}(t)$ with values in $\Omega$ \cite{Kolmogorov,Gardiner,vKampen}. Here, $P_n(y_n,t_n;\dots;y_1,t_1)$ gives the probability to observe the values $y_i$ at times $t_i$ for $i=1,\dots,n$ for the stochastic process $\mathcal{Y}(t)$. For classical Markov processes obeying the relation
\begin{align}\label{eq:classMarkov}
 &P_{1|n-1}(y_n,t_n|y_{n-1},t_{n-1};\dots;y_1,t_1)\nonumber\\
 &=P_{1|1}(y_n,t_n|y_{n-1},t_{n-1})~,
 \end{align}
for any conditional probability distribution $P_{1|n-1}$, the full hierarchy is completely determined by the $1$-point probability distribution $P_1$ and the so-called transition probability $P_{1|1}$. However, these two nonnegative functions cannot be chosen arbitrarily but must satisfy
 \begin{align}
  P_1(y_2,t_2)&=\sum_{y_1\in\Omega}P_{1|1}(y_2,t_2|y_1,t_1)P_{1}(y_1,t_1)~,\label{eq:KonsistMarkov}\\
P_{1|1}(y_3,t_3&|y_1,t_1)\label{eq:Chapman-Markov}\\
 &=\sum_{y_2\in\Omega}P_{1|1}(y_3,t_3|y_2,t_2)P_{1|1}(y_2,t_2|y_1,t_1)~,\nonumber
 \end{align}
 where Eq.\,\eqref{eq:Chapman-Markov} is called the Chapman-Kolmogorov equation. These equations thus characterize uniquely  a Markov process. 

One may equivalently write this relation for the transition probability in its differential form
\begin{align}
 &\frac{\mathrm{d}}{\mathrm{d}t}P_{1|1}(y,t|x,s)\\
&=\sum_{z\in\Omega}
\Big[ \mathcal{W}_{yz}(t)P_{1|1}(z,t|x,s)-\mathcal{W}_{zy}(t)P_{1|1}(y,t|x,s) \Big]~,\nonumber
\end{align}
where the transition probability per unit time $\mathcal{W}_{yz}(t)$ is nonnegative and represents the probability for a transition to $y$ given the classical state was $z$ at time $t$. It is clear that the (differential) Chapman-Kolmogorov equation characterizes the transition probability $P_{1|1}$ but a similar equation, the so-called Pauli master equation, can be derived for the $1$-point probability distribution, that is,
\begin{align}
 &\frac{\mathrm{d}}{\mathrm{d}t}P_{1}(y,t)\\
&=\sum_{z\in\Omega}
 \Big[ \mathcal{W}_{yz}(t)P_{1}(z,t)-\mathcal{W}_{zy}(t)P_{1}(y,t) \Big]~.\nonumber
\end{align}
One thus concludes that Eq.\,\eqref{eq:PauliMQ1} can be interpreted as Pauli master equation for the $1$-point probability distribution of a classical Markov process with $\Omega=\{1,\dots,\mathrm{dim}\mathcal{H}\}$ if and only if
\begin{equation}\label{eq:transRate2}
 W_{mn}(t)=\sum_j \gamma_j(t) |\langle\phi_m(t)|A_j(t)|\phi_n(t)\rangle|^2\geq0~,
\end{equation}
for all $t\geq0$ and $m\neq n$. Hence, P-divisibility of the quantum process is a sufficient condition to warrant positivity of the rates $W_{mn}(t)$ (see Eq.\,\eqref{eq:CondPdiv}). This shows that quantum non-Markovianity defined with respect to the generalized trace distance based measure does not only provide an interpretation in terms of an information backflow but also allows for a connection to classical Markovian stochastic processes. Namely, to each P-divisible quantum process, given as the solution of a master equation of the form \eqref{eq:MQ} with Lindblad operators and rates satisfying Eq.\,\eqref{eq:CondPdiv} for an initial state $\rho(0)$, one associates a classical Markovian process obtained as solution of the classical master equation \eqref{eq:PauliMQ1} with transition rates given by Eq.\,\eqref{eq:transRate} and initial condition specified by the eigenvalues of $\rho(0)$.

Finally, we note that P-divisibility of the quantum process is indeed equivalent to the positivity of the rates $W_{mn}(t)$ of the classical Pauli master equation if the quantum master equation has the property that the eigenbases $\{|\phi_n(t)\rangle\}$ of $\rho(t)$ run over all orthonormal bases when varying the initial state $\rho(0)$. This is satisfied if the maximally mixed state is in the image of the dynamical map $\Phi_t$ which holds true for sufficiently small times due to continuity of the process and as we have $\Phi_{0}=\mathbb{1}$. For two-level systems it is shown in Appendix \ref{app:maxMix} that this constraint is not only sufficient but also necessary. If any orthonormal basis is encountered all classical processes derived from the quantum master equation are classically Markovian if and only if the quantum dynamics is P-divisible.

\section{Generalization of the trace distance based non-Markovianity measure}
\label{sec:gener-trace-dist}

Having stated the generalized definition for quantum non-Markovianity which allows for a connection to classical Markov processes, we now consider the expression of the corresponding measure and address its mathematical and physical features.
According to the definition of a quantum Markovian process given in Sec.~\ref{sec:ensemble-discr}, 
it is natural to consider the following measure for quantum non-Markovianity, quantifying the degree of memory effects with respect to this generalized definition \cite{ReviewHP}
\begin{align}\label{eq:newM}
 \mathcal{N}(\Phi)&\equiv\max_{\{p_i,\rho_i\}}\int_{\sigma>0} \mathrm{d}t~ \sigma(t,p_i,\rho_i)\\
\intertext{with}
\sigma(t,p_i,\rho_i)&\equiv\frac{\mathrm{d}}{\mathrm{d}t}\norm{p_1\Phi_{t}(\rho_1)-p_2\Phi_{t}(\rho_2)}_1~,
\end{align}
where the integration runs over all intervals where the distinguishability $\norm{p_1\Phi_{t}(\rho_1)-p_2\Phi_{t}(\rho_2)}_1$ increases. A process $\Phi$ is said to be non-Markovian if $\mathcal{N}(\Phi)>0$. As discussed in Sec.~\ref{sec:ensemble-discr} this measure still admits the interpretation as a quantifier for the information flow from the environment back to the open 
system. In addition, as discussed in Sec.~\ref{sec:dist-divis}, Markovianity will now be equivalent to P-divisibility of a quantum process. Note that this feature was only sufficient in the original definition (see Section \ref{sec:examples}).

To begin with, we concentrate on the maximization procedure contained in the quantifier for non-Markovianity \eqref{eq:newM} which, contrary to the original definition \cite{BLP}, now even requires to sample also over binary probability distributions. Fortunately, a similar characterization of pairs of states maximizing Eq.\,\eqref{eq:newM} can be proven \cite{OptimalStates} and yet also a local representation is admitted \cite{LocalRep} simplifying the sampling significantly.

We finally illustrate the novel definition of non-Markovianity by means of examples which show the difference to the original characterization and other approaches to non-Markovianity.

\subsection{Expressions of the generalized non-Markovianity measure}
\label{sec:expr-gener-non}
A set $\{p_i,\rho_{1,2}\}$, where $\rho_{1,2}\in\mathcal{S}(\mathcal{H})$ and $\{p_i\}$ defines a binary probability distribution, is said to be \emph{optimal} if the maximum in Eq.\,\eqref{eq:newM} is attained for it.
Note that the quantum states of an optimal set are necessarily nonequal as the dynamics of the norm of a nonindefinite Helstrom matrix under any dynamical map is trivial due to trace preservation and positivity of the map.

\begin{mythe}\label{the:orthogonal}
The states of an optimal set must be orthogonal, i.e.
\begin{align}\label{eq:newM2}
 \mathcal{N}(\Phi)&=\max_{\{p_i,\rho_1\perp\rho_2\}}\int_{\sigma>0} \mathrm{d}t~ \sigma(t,p_i,\rho_i)~.
\end{align}
\end{mythe}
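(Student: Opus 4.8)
The plan is to exploit that the entire measure \eqref{eq:newM} depends on the chosen set $\{p_i,\rho_i\}$ only through the associated Helstrom matrix $\Delta=p_1\rho_1-p_2\rho_2$. Indeed, by linearity $p_1\Phi_t(\rho_1)-p_2\Phi_t(\rho_2)=\Phi_t(\Delta)$, so that $\sigma(t,p_i,\rho_i)=\frac{\mathrm{d}}{\mathrm{d}t}\norm{\Phi_t(\Delta)}_1$ and the maximization in \eqref{eq:newM} is effectively a maximization over all admissible Helstrom matrices $\Delta$. Recall from \eqref{eq:boundHelstrom} that every such $\Delta$ satisfies $\norm{\Delta}_1\leq1$, with equality attained precisely when the states are orthogonal, $\rho_1\perp\rho_2$. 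The strategy is therefore to show that any Helstrom matrix with $\norm{\Delta}_1<1$ can be replaced by an orthogonal one yielding a value of the integral that is at least as large, so that the maximum is necessarily reached on the set of orthogonal pairs.

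First I would, given an arbitrary Helstrom matrix $\Delta$ with $\norm{\Delta}_1\leq1$, pass to the normalized operator $\tilde\Delta=\norm{\Delta}_1^{-1}\Delta$. Using the Jordan-Hahn decomposition $\Delta=S-Q$ with $S,Q\geq0$ and $S\perp Q$ exactly as in the proof of Lemma \ref{lem:Helstrom}, the operator $\tilde\Delta$ is again a Helstrom matrix: it is generated by the orthogonal states $\tilde\rho_1=(\mathrm{Tr}S)^{-1}S$ and $\tilde\rho_2=(\mathrm{Tr}Q)^{-1}Q$ with weights obtained by dividing $\mathrm{Tr}S$ and $\mathrm{Tr}Q$ by $\norm{\Delta}_1=\mathrm{Tr}S+\mathrm{Tr}Q$, so that $\tilde p_1+\tilde p_2=1$ and $\norm{\tilde\Delta}_1=1$.

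Next I would invoke homogeneity. Since $\Phi_t$ is linear and the trace norm is homogeneous, $\norm{\Phi_t(\tilde\Delta)}_1=\norm{\Delta}_1^{-1}\norm{\Phi_t(\Delta)}_1$ for every $t$, so the two trajectories differ only by the constant positive factor $\norm{\Delta}_1^{-1}\geq1$. Consequently the corresponding rates obey $\tilde\sigma(t)=\norm{\Delta}_1^{-1}\sigma(t)$ and share exactly the same growth intervals $\{\sigma>0\}$. Integrating over these intervals gives $\int_{\sigma>0}\mathrm{d}t\,\tilde\sigma=\norm{\Delta}_1^{-1}\int_{\sigma>0}\mathrm{d}t\,\sigma\geq\int_{\sigma>0}\mathrm{d}t\,\sigma$, since the integrand is nonnegative on the domain of integration and $\norm{\Delta}_1^{-1}\geq1$. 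Hence the orthogonal pair performs at least as well as the original one, which shows that the maximum in \eqref{eq:newM} may be restricted to orthogonal states and proves \eqref{eq:newM2}. Moreover, whenever the maximal value is strictly positive the inequality becomes strict for any $\Delta$ with $\norm{\Delta}_1<1$, so every optimal set is forced to satisfy $\norm{\Delta}_1=1$, i.e.\ $\rho_1\perp\rho_2$.

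I expect the only delicate point to be bookkeeping rather than substance: one must check that the rescaling $\Delta\mapsto\tilde\Delta$ genuinely produces a legitimate Helstrom matrix, with nonnegative weights summing to one and bona fide states, which is precisely what Lemma \ref{lem:Helstrom} guarantees, and that passing to $\tilde\Delta$ does not move the growth intervals, which follows from positivity of the scaling factor. The degenerate Markovian case, in which the integral vanishes identically, is trivial since orthogonal pairs then attain the maximal value zero as well.
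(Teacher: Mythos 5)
Your proof is correct and follows essentially the same route as the paper's: both rescale the Helstrom matrix $\Delta$ by the factor $\norm{\Delta}_1^{-1}$ using the Jordan--Hahn decomposition underlying Lemma \ref{lem:Helstrom} to produce a pair of orthogonal states with new weights, and then invoke linearity of $\Phi_t$ and homogeneity of the trace norm to conclude that the rescaled pair performs at least as well, and strictly better whenever $\norm{\Delta}_1<1$. The only cosmetic difference is that you argue directly (any set is dominated by an orthogonal one), while the paper frames the identical inequality as a contradiction to the assumed optimality of a non-orthogonal set.
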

\begin{proof}
Suppose $\{p_i,\rho_{1,2}\}$ be optimal with $\rho_1\not\perp\rho_2$. Along the lines of the proof of lemma \ref{lem:Helstrom} we obtain a probability distribution $\{q_i\}$ and two orthogonal states $\varrho_{1,2}$ such that
\begin{equation}
 p_1\rho_1-p_2\rho_2=\lambda(q_1\varrho_1-q_2\varrho_2)~,
\end{equation}
where $\lambda=\mathrm{Tr}|p_1\rho_1-p_2\rho_2|$. As $\rho_1\not\perp\rho_2$, we have $0<\lambda<1$ according to Eq.~\eqref{eq:boundHelstrom}. Here, $\lambda>0$ holds as states of an optimal set are nonequal by definition. By means of linearity of the dynamical maps $\Phi_t$ and homogeneity of the trace norm one finally obtains
\begin{equation}
  \norm{q_1\varrho_1(t)-q_2\varrho_2(t)}_1=\frac{1}{\lambda}\norm{p_1\rho_1(t)-p_2\rho_2(t)}_1
\end{equation}
for all $t\geq0$ where $\lambda^{-1}>1$. This shows that any increase of $\norm{p_1\rho_1(t)-p_2\rho_2(t)}_1$ is exceeded by the increase of $\norm{q_1\varrho_1(t)-q_2\varrho_2(t)}_1$. Hence, $\{q_i,\varrho_{1,2}\}$ yields a non-Markovianity strictly larger than the set $\{p_i,\rho_{1,2}\}$ contradicting its optimality.
\end{proof}

We define $\mathring{\mathcal{S}}(\mathcal{H})$ to be the interior of the state space, i.e. the set of all quantum states $\rho$ for which there is an $\epsilon>0$ such that all Hermitian operators $X$ with unit trace satisfying $||\rho-X||_1\leq\epsilon$ belong to $\mathcal{S}(\mathcal{H})$. Hence, all states in the interior have full rank. Note that $\mathring{\mathcal{S}}(\mathcal{H})=\emptyset$ if $\mathrm{dim}\mathcal{H}=\infty$ which implies that the local representation of the measure \eqref{eq:newM} introduced below is not available in infinite dimensions. However, these quantum systems can frequently be accurately described by finite-dimensional Hilbert spaces.

Based on the orthogonality of states of an optimal set we can also establish a local representation for the generalized measure as provided for the original definition in Ref. \cite{LocalRep}. To this end, one first has to prove a characterization of enclosing surfaces which are defined as follows:
 A set $\partial U(\rho)\subset\mathcal{S}(\mathcal{H})$ not containing $\rho\in\mathring{\mathcal{S}}(\mathcal{H})$ is called an \emph{enclosing surface} of $\rho$ if and only if for any nonzero, Hermitian and traceless operator $Y$ there exists a real number $\mu>0$ such that
\begin{equation}\label{eq:defEncl}
 \rho+2\mu Y\in\partial U(\rho)~.
\end{equation}

\begin{figure}[hb]
    \centering
     \includegraphics[width=4cm]{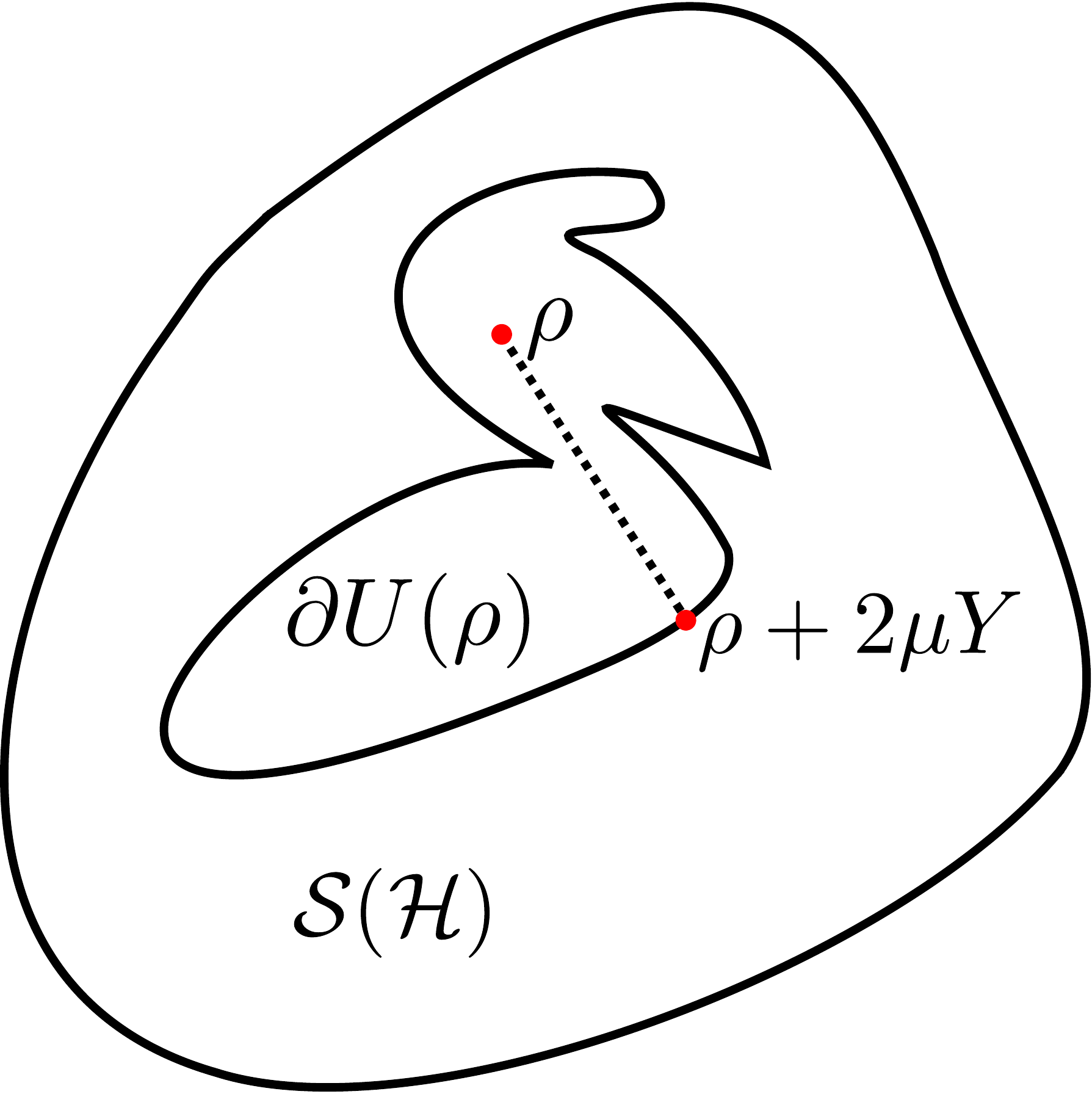} 
     \caption{Illustration of an enclosing surface $\partial U(\rho)$ of an inner point $\rho$ of the state space $\mathcal{S}(\mathcal{H})$.}\label{fig:EnclosingSurf}
\end{figure}

\begin{mylem}\label{lem:equivS}
 Let $\partial U(\rho)$ be an enclosing surface and $X$ a nonzero, Hermitian and indefinite operator. Then there exists a real number $\lambda>0$ with $\lambda|\mathrm{Tr}X|<1$ such that 
\begin{align}\label{eq:defNEncl}
 \frac{1}{p_-}(p_+\rho-\mathrm{sgn}(\mathrm{Tr}X)\lambda X)\in\partial U(\rho)~,
\end{align}
where~ $p_\pm\equiv\frac{1}{2}(1\pm\lambda|\mathrm{Tr}X|)$ and
\begin{equation}
\mathrm{sgn}(x)=\begin{cases}
                -1~,&~\text{if}~x\leq0\\ +1~,&~\text{else}
               \end{cases}
\end{equation}
\end{mylem}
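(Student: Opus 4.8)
The plan is to reduce the assertion about the indefinite, generally non-traceless operator $X$ to the defining property of the enclosing surface, which only furnishes points of the form $\rho+2\mu Y$ for \emph{traceless} $Y$. The first step is therefore to extract from $X$ the right traceless operator. Writing $\tau\equiv|\mathrm{Tr}X|$ and noting that the given convention yields $\mathrm{sgn}(\mathrm{Tr}X)\,\tau=\mathrm{Tr}X$ in all cases (including $\mathrm{Tr}X=0$, where $\mathrm{sgn}(0)=-1$), I would set
\[
 Y\equiv\tau\,\rho-\mathrm{sgn}(\mathrm{Tr}X)\,X .
\]
This $Y$ is manifestly Hermitian, independent of the parameter $\lambda$, and traceless since $\mathrm{Tr}\,Y=\tau-\mathrm{sgn}(\mathrm{Tr}X)\,\mathrm{Tr}X=\tau-\tau=0$.

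The main obstacle, and the only point at which the indefiniteness hypothesis is genuinely needed, is to show $Y\neq0$. If $Y=0$ then $X=\mathrm{sgn}(\mathrm{Tr}X)\,\tau\,\rho=(\mathrm{Tr}X)\,\rho$, a real multiple of the positive operator $\rho$ and hence semidefinite; this contradicts indefiniteness of $X$. Thus $Y$ is a nonzero, Hermitian, traceless operator, and the enclosing-surface property applied to $Y$ provides a number $\mu>0$ with $\rho+2\mu Y\in\partial U(\rho)$. The remaining task is purely to select $\lambda$ so that the candidate operator in \eqref{eq:defNEncl} coincides with this point.

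To carry this out I would first record the consistency check that the candidate has unit trace: with $p_\pm=\tfrac12(1\pm\lambda\tau)$ one has $p_+-\lambda\tau=p_-$, so $\mathrm{Tr}\bigl[\tfrac{1}{p_-}(p_+\rho-\mathrm{sgn}(\mathrm{Tr}X)\lambda X)\bigr]=1$. Subtracting $\rho$ and using $p_+/p_--1=\lambda\tau/p_-$ collapses the difference to a multiple of $Y$,
\[
 \frac{1}{p_-}\bigl(p_+\rho-\mathrm{sgn}(\mathrm{Tr}X)\lambda X\bigr)-\rho=\frac{\lambda}{p_-}\,Y ,
\]
so matching to $\rho+2\mu Y$ requires $\lambda/p_-=2\mu$, i.e. $\lambda/(1-\lambda\tau)=\mu$. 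Solving gives $\lambda=\mu/(1+\mu\tau)$, which is automatically positive (as $\mu>0$, $\tau\ge0$) and satisfies $\lambda\tau=\mu\tau/(1+\mu\tau)<1$, so that $p_->0$ and the expression is well defined. For this $\lambda$ the candidate equals $\rho+2\mu Y\in\partial U(\rho)$, proving the claim. Beyond the consistent handling of the sign convention at $\mathrm{Tr}X=0$, the argument is entirely linear bookkeeping.
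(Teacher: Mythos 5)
Your proof is correct and takes essentially the same route as the paper's: you construct the same traceless operator $Y=|\mathrm{Tr}X|\,\rho-\mathrm{sgn}(\mathrm{Tr}X)\,X$, invoke the enclosing-surface property on $Y$ to get $\mu>0$, and perform the same reparametrization $\lambda=\mu/(1+\mu|\mathrm{Tr}X|)$ so that $\rho+2\mu Y$ matches the operator in Eq.~\eqref{eq:defNEncl}. The only difference is that you explicitly verify $Y\neq0$ from the indefiniteness of $X$, a point the paper merely asserts in this proof (and only spells out in Appendix~\ref{app:enclSurf}).
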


\begin{proof}
Let $\partial U(\rho)$ be an enclosing surface of $\rho$. Consider a nonzero, Hermitian, indefinite operator $X$. The operator 
\begin{equation}
 Y= \mathrm{sgn}(\mathrm{Tr}X)\bigl[(\mathrm{Tr}X)\rho-X\bigr]
\end{equation}
defines a nonzero, Hermitian and traceless operator. By definition, there exists a real number $\mu>0$ such that
\begin{equation}
 \varrho=\rho+2\mu Y\in\partial U(\rho)~.
\end{equation}
We then define a real number $\lambda$ by means of 
\begin{equation}
 \mu=\frac{\lambda}{1-\lambda|\mathrm{Tr}X|}
\end{equation}
so that $\lambda=\mu/(1+\mu|\mathrm{Tr}X|)>0$ holds and hence $\lambda|\mathrm{Tr}X|<1$. For $p_\pm=\frac{1}{2}(1\pm\lambda|\mathrm{Tr}X|)$
one finally obtains
\begin{align}
 \varrho&=\rho+\frac{2\lambda}{1-\lambda|\mathrm{Tr}X|}\left\{\mathrm{sgn}(\mathrm{Tr}X)\bigl[(\mathrm{Tr}X)\rho-X\bigr]\right\}\nonumber\\
&=\rho-\frac{\lambda}{p_-}\mathrm{sgn}(\mathrm{Tr}X) X+\frac{p_+-p_-}{p_-}\rho \nonumber\\
&=\frac{1}{p_-}(p_+\rho-\mathrm{sgn}(\mathrm{Tr}X)\lambda X)~,
\end{align}
which is Eq.\,\eqref{eq:defNEncl}.
\end{proof}
As proven in Appendix \ref{app:enclSurf}, lemma \ref{lem:equivS} can actually be augmented to show that the characterization \eqref{eq:defNEncl} of an enclosing surface is equivalent to Eq.\,\eqref{eq:defEncl}. 
\begin{mythe}\label{the:locRep1}
 The generalized measure of quantum non-Markovianity admits a local representation, i.e. 
\begin{align}\label{eq:newMloc}
 \mathcal{N}(\Phi)&=\max_{\{p_i\},\rho_2\in\partial U(\rho_1)}\int_{\sigma>0} \mathrm{d}t~ \tilde{\sigma}(t,p_i,\rho_i)\\
\intertext{with}
\tilde{\sigma}(t,p_i,\rho_i)\equiv&\frac{1}{\norm{p_1\rho_1-p_2\rho_2}_1}\frac{\mathrm{d}}{\mathrm{d}t}\norm{p_1\Phi_t(\rho_1)-p_2\Phi_t(\rho_2)}_1~,\label{eq:sigmatilde}
\end{align}
where $\rho_1\in\mathring{\mathcal{S}}(\mathcal{H})$ is any fixed inner point of the state space and  $\partial U(\rho_1)$ refers to an arbitrary enclosing surface of $\rho_1$.
\end{mythe}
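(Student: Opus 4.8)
The plan is to reduce the full optimization over pairs of states to an optimization over a single ``direction'' in the space of Helstrom matrices, exploiting that the normalized rate $\tilde{\sigma}$ is insensitive to the magnitude of $\Delta$. The central observation I would record first is that, writing $\Delta=p_1\rho_1-p_2\rho_2$ and using linearity of $\Phi_t$ together with homogeneity of the trace norm, the normalized quantity reads $\tilde{\sigma}=\norm{\Delta}_1^{-1}\tfrac{\mathrm{d}}{\mathrm{d}t}\norm{\Phi_t(\Delta)}_1$, which is invariant under $\Delta\mapsto c\Delta$ for every real $c\neq0$. Hence $\tilde{\sigma}$, and with it the integral $\int_{\tilde{\sigma}>0}\mathrm{d}t\,\tilde{\sigma}$, depends only on the ray through $\Delta$.

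Next I would rewrite $\mathcal{N}(\Phi)$ in a form adapted to $\tilde{\sigma}$. By Theorem \ref{the:orthogonal} the maximum is attained on orthogonal pairs, for which $\norm{\Delta}_1=p_1+p_2=1$ by Eq.\,\eqref{eq:boundHelstrom}; on this domain $\sigma$ and $\tilde{\sigma}$ coincide, so $\mathcal{N}(\Phi)=\max_{\{p_i,\rho_1\perp\rho_2\}}\int_{\tilde{\sigma}>0}\mathrm{d}t\,\tilde{\sigma}$. Because every indefinite Hermitian operator is, via its Jordan-Hahn decomposition, a positive multiple of an orthogonal-state Helstrom matrix (cf. the proof of Lemma \ref{lem:Helstrom}), while nonindefinite matrices yield a monotonically decreasing norm and thus contribute nothing, this is in turn equal to the maximum of $\int_{\tilde{\sigma}>0}\mathrm{d}t\,\tilde{\sigma}$ taken over all indefinite directions $X$.

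The core step is then to match these directions with the local parametrization. Fixing an inner point $\rho_1$ and an enclosing surface $\partial U(\rho_1)$, for the $\geq$ bound I would take an optimal indefinite $X$ and invoke Lemma \ref{lem:equivS}, which supplies $\lambda>0$ and weights $p_\pm=\tfrac{1}{2}(1\pm\lambda|\mathrm{Tr}X|)$ with $\rho_2:=p_-^{-1}(p_+\rho_1-\mathrm{sgn}(\mathrm{Tr}X)\lambda X)\in\partial U(\rho_1)$; a direct computation gives $p_+\rho_1-p_-\rho_2=\mathrm{sgn}(\mathrm{Tr}X)\lambda X$, a Helstrom matrix proportional to $X$. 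By the scale- and sign-invariance of $\tilde{\sigma}$ this admissible local configuration reproduces the optimal value, so the local maximum is at least $\mathcal{N}(\Phi)$. For the $\leq$ bound I would note conversely that any local configuration $(p_i,\rho_1,\rho_2\in\partial U(\rho_1))$ is a genuine pair of states defining some direction: if indefinite it contributes exactly $\int_{\tilde{\sigma}>0}\mathrm{d}t\,\tilde{\sigma}$, which the previous paragraph bounds by $\mathcal{N}(\Phi)$, and if nonindefinite it contributes nothing. Taking the maximum yields the reverse inequality and hence equality.

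The routine parts are the scale-invariance identity and the trace/positivity check that the $\rho_2$ produced by Lemma \ref{lem:equivS} is a normalized state. The step demanding the most care is the bookkeeping that ties the two maximization domains together: one must verify that the optimal orthogonal pair corresponds to an indefinite direction (which is automatic, since otherwise $\mathcal{N}(\Phi)=0$ and both sides vanish), and that Lemma \ref{lem:equivS} covers \emph{every} indefinite $X$, including the traceless case $\mathrm{Tr}X=0$, where $p_\pm=1/2$ makes the construction collapse onto the defining property \eqref{eq:defEncl} of the enclosing surface. Once this correspondence is recognized as a bijection of rays, the equality of the two maxima follows.
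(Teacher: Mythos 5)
Your proposal is correct and follows essentially the same route as the paper's own proof: both inequalities rest on theorem \ref{the:orthogonal}, lemma \ref{lem:Helstrom} (Jordan--Hahn decomposition into an orthogonal-pair Helstrom matrix) and lemma \ref{lem:equivS}, with linearity of $\Phi_t$ and homogeneity of the trace norm guaranteeing that $\tilde{\sigma}$ on a local configuration equals $\sigma$ on the corresponding orthogonal pair. Your repackaging in terms of indefinite ``directions,'' together with the explicit handling of nonindefinite Helstrom matrices (whose trace norm is constant and which therefore contribute nothing, a case the paper glosses over), is only a cosmetic reorganization of the same argument.
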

We emphasize that $\tilde{\sigma}(t,p_i,\rho_i)$ is well-defined as $0<\norm{p_1\rho_1-p_2\rho_2}_1$ for any state $\rho_2\in\partial U(\rho_1)$ and probability distribution $\{p_i\}$ as by definition of an enclosing surface we have $\rho_2\neq\rho_1$ (cf. Eq.\,\eqref{eq:boundHelstrom}).
\begin{proof}
Let $\partial U(\rho_1)$ be an enclosing surface of $\rho_1$. To prove that the corresponding local representation yields a value smaller than or equal to the original definition \eqref{eq:newM2} one follows the lines of the proof of theorem \ref{the:orthogonal}. Let $\rho_2\in\partial U(\rho_1)$ and a probability distribution $\{p_i\}$ be given. Now, according to lemma \ref{lem:Helstrom} there exist two state $\varrho_1\perp\varrho_2$ and a probability distribution $\{q_i\}$ such that
\begin{align}
 p_1\rho_1-p_2\rho_2=\lambda(q_1\varrho_1-q_2\varrho_2)~,
\end{align}
where $\lambda=\norm{p_1\rho_1-p_2\rho_2}_1>0$. We then obtain 
\begin{align}
 \tilde{\sigma}(t,p_i,\rho_i)=\sigma(t,q_i,\varrho_i)~,
\end{align}
for all $t\geq0$ due to linearity of $\Phi_t$ and homogeneity of the trace norm and the derivative. We conclude that the right-hand side of Eq.\,\eqref{eq:newMloc} is smaller than or equal to $\mathcal{N}(\Phi)$ as defined in Eq.\,\eqref{eq:newM2}.

Conversely, let $\varrho_1\perp\varrho_2$ be two orthogonal states and denote by $\{q_i\}$ a probability distribution. Then $\Delta= q_1\varrho_1-q_2\varrho_2$ defines a nonzero, Hermitian, indefinite operator. Thus, according to lemma \ref{lem:equivS}, there exists a real number $\lambda>0$ with $\lambda|\mathrm{Tr}\Delta|<1$ such that
\begin{equation}
 p_+\rho_1-p_-\rho_2=c\lambda \Delta
\end{equation}
where $c=\mathrm{sgn}(\mathrm{Tr}\Delta)$ and $p_\pm=\tfrac{1}{2}(1\pm\lambda|\mathrm{Tr}\Delta|)$ for some quantum state $\rho_2\in\partial U(\rho_1)$ of the enclosing surface. As $|c|=1$ it follows that $\norm{p_+\rho_1-p_-\rho_2}_1=\lambda>0$. Linearity of the dynamical map and homogeneity of the trace norm and the derivative then yields
\begin{align}
\tilde{\sigma}(t,p_\pm,\rho_i)=\sigma(t,q_i,\varrho_i)~,
\end{align}
showing that the original definition \eqref{eq:newM2} leads to a value which is smaller than or equal to the right-hand side of Eq.\,\eqref{eq:newMloc}. This thus concludes the proof, that is, the maximization over an enclosing surface with an information flux rescaled by the initial distinguishability reproduces the dynamics of the trace norm of orthogonal states.  
\end{proof}

A careful reformulation of an enclosing surface thus allows to establish an equivalent local and universal representation of the generalized trace distance based measure as for the original definition \cite{LocalRep}. This shows that also for the novel characterization, non-Markovianity is a universal feature appearing everywhere in state space. Similarly, it suffices if Eq.\,\eqref{eq:defEncl} holds for exactly one $\mu>0$ as $\lambda$ is uniquely determined by this parameter given a nonzero, Hermitian and indefinite operator (cf. proof of lemma \ref{lem:equivS}). In addition, also here, no assumption on the enclosing surface concerning for example the shape or the smoothness is needed implying a great benefit for the analytical, numerical and experimental determination of the generalized measure.

\subsection{Examples}\label{sec:examples}
As already stated before the fundamental difference between the trace distance based measure for non-Markovianity and its generalization is that in the trace distance approach P-divisibility is only a sufficient criterion for Markovianity. This fact becomes particularly apparent when uniform translations of states are encountered in the dynamics which do not describe positive maps but leave the trace distance unchanged \cite{NonUnitalNM}. However, choosing unequal weights $p_i$ the Helstrom matrices are no longer invariant under such operations thus making it possible to detect the effect of such maps. 

To illustrate this property, we consider a two-level open quantum system, i.e. $\mathcal{H}=\mathbb{C}^2$, undergoing a dynamics which, as depicted in Fig.\,\ref{fig:BlochSphere}, can be described in the Bloch sphere as an isotropic contraction followed by a translation along the $z$-axis. The dynamics is analytically described by a time-local master equation \eqref{eq:MEq} where the generator $\mathcal{K}_t$ obeys
\begin{align}\label{eq:ExpGen1}
 \mathcal{K}_t\rho(t)=\sum_{j=1}^3\frac{\gamma(t)}{4}\left[\sigma_j\rho(t)\sigma_j-\rho(t)\right]~,
\end{align}
for $0\leq t\leq t_1$ with $t_1<T$ and
\begin{align}\label{eq:ExpGen2}
 \mathcal{K}_t\rho(t)=&\frac{-b(t)}{2}\left[\sigma_-\rho(t)\sigma_+-\tfrac{1}{2}\{\sigma_+\sigma_-,\rho(t)\}\right]\nonumber\\
&+\frac{b(t)}{2}\left[\sigma_+\rho(t)\sigma_--\tfrac{1}{2}\{\sigma_-\sigma_+,\rho(t)\}\right]~,
\end{align}
if $t_1\leq t\leq T$ holds. Here, $\sigma_j$ refers to the Pauli matrices and $\sigma_{+(-)}$ describes the usual raising (lowering) operator with respect to the eigenstates $|\pm\rangle$ of $\sigma_3$. Moreover, the rates satisfy $\gamma(t), b(t)\geq0$ for all $t$ in their respective domains so that the second phase of the process is neither CP- nor P-divisible which can be easily shown using theorem \ref{the:Divisibility}. This dynamical process exhibits the essential feature of the generalized amplitude damping channel studied in Ref. \cite{NonUnitalNM}.

\begin{figure}[hb]
    \centering
     \subfloat[]{{\includegraphics[width=2.45cm]{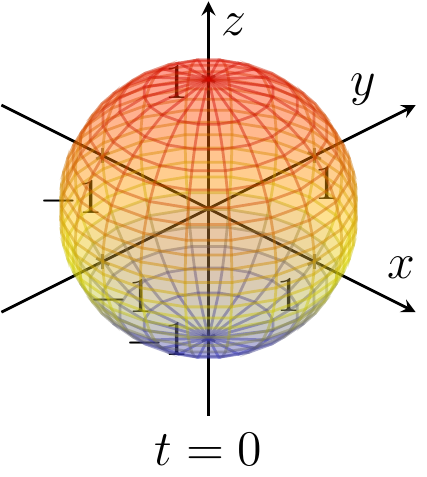} }}$\rightarrow$
     \subfloat[]{{\includegraphics[width=2.45cm]{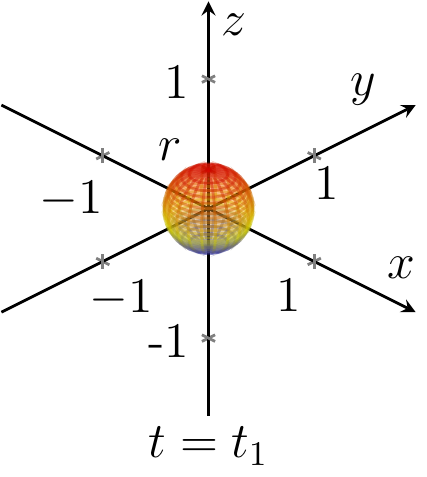} }}$\rightarrow$
     \subfloat[]{{\includegraphics[width=2.45cm]{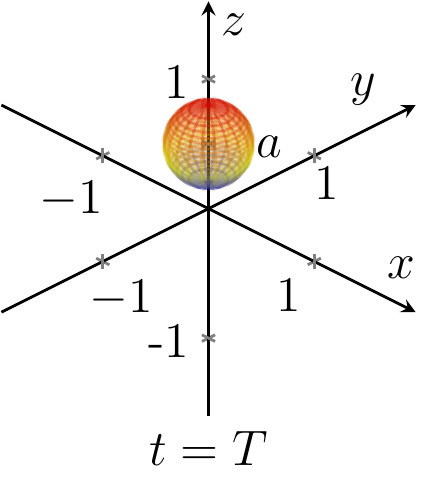} }}
     \caption{Bloch sphere representation of the action of the dynamical map $\Phi_t$ corresponding to the time-local generator $\mathcal{K}_t$ defined by Eqs.\,\eqref{eq:ExpGen1} and \eqref{eq:ExpGen2} for $t=0,t_1$ and $T$.}\label{fig:BlochSphere}
\end{figure}

Employing the Bloch vector representation for two-level systems the master equation is equivalently described by a differential equation for the Bloch vector,
\begin{equation}
 \frac{\mathrm{d}}{\mathrm{d}t}\vec{v}(t)=\begin{cases}
                                           A(t)\vec{v}(t)~,&~0\leq t\leq t_1<T\\ \vec{b}(t)~,&~t_1\leq t\leq T
                                          \end{cases}
\end{equation}
where $A(t)=\mathrm{diag}(-\gamma(t),-\gamma(t),-\gamma(t))$ and $\vec{b}(t)=(0,0,b(t))^T$. Hence, the process' first phase corresponds to an isotropic contraction of the Bloch sphere $B_1=\{\vec{v}\,|\,|\vec{v}|\leq1\}$ to the smaller sphere $B_r=\{\vec{v}\,|\,|\vec{v}|\leq r\}$ with radius
\begin{equation}
 r=\exp\left[-\int_0^{t_1}\mathrm{d}t\, \gamma(t)\right]\in(0,1)~,
\end{equation}
which is clearly CP-divisible and, therefore, Markovian. On the other hand, the second phase describes a uniform translation of the Bloch sphere along the $z$-axis, i.e. $\vec{v}(t_1)\mapsto\vec{v}(t_1)+(0,0,a)^T$ with
\begin{equation}
 a=\int_{t_1}^{T}\mathrm{d}t\, b(t)>0~,
\end{equation}
where we thus have to require $a\leq 1-r$ in order to maintain positivity of the dynamical map (see Fig.\,\ref{fig:BlochSphere}). It is easily shown that this condition is also necessary and sufficient for complete positivity of the process. 

The trace norm of the Helstrom matrix $\Delta=p_1\rho_1-p_2\rho_2$ at time $t$ for two quantum states $\rho_{1,2}=\tfrac{1}{2}(\mathbb{1}_2+\vec{v}_{1,2}\cdot\vec{\sigma})$, evolving according to the described dynamical map, is given by
\begin{equation}\label{eq:DeltaExp}
 \norm{\Delta(t)}_1=\frac{1}{2}\left\{|p_1-p_2+|\vec{w}(t)|\,|+|p_1-p_2-|\vec{w}(t)|\,|\right\}~,
\end{equation}
where $\vec{w}(t)=p_1\vec{v}_1(t)-p_2\vec{v}_2(t)$. Clearly, if $p_1=p_2=1/2$, then $\norm{\Delta(t)}_1=\tfrac{1}{2}|\vec{v}_1(t)-\vec{v}_2(t)|$ showing that the unbiased case is unable to detect the non-Markovianity of the process resulting from the uniform translation.

Without restriction we may assume $p_1\geq p_2$ so that
\begin{equation}\label{eq:DeltaExp2}
 \norm{\Delta(t)}_1=\begin{cases}
                     p_1-p_2~,~&\text{if}~ p_1-p_2>|\vec{w}(t)|\\
		    |\vec{w}(t)|~,~&\text{if}~ p_1-p_2\leq|\vec{w}(t)|~.\\
                    \end{cases}
\end{equation}
and, due to theorem \ref{the:orthogonal}, we may restrict ourselves to orthogonal states corresponding to antipodal unit vectors. That is, $\vec{v}_1=-\vec{v}_2$ with $|\vec{v}_1|=1$, which implies $\vec{w}(t)=\vec{v}_1(t)$.

If the probability distribution is such that $p_1-p_2>|\vec{v}_1(t)|$ holds for all $t_1\leq t\leq T$, then $\norm{\Delta(t)}_1$ is a monotonically decreasing function of $t$. However, if $p_1-p_2<|\vec{v}_1(T)|$, then 
\begin{equation}
 \int_{\sigma>0}\mathrm{d}t\,\sigma(t)=|\vec{v}_1(T)|-(p_1-p_2)>0~,
\end{equation}
indicating non-Markovianity. The change of the trace norm thus increases with decreasing difference $p_1-p_2$ which is bounded by $p_1-p_2=r=|\vec{v}_1(t_1)|$. Calculating explicitly the change of the trace norm for $p_1-p_2\leq r$ one finds
\begin{align}
 &\norm{\Delta(T)}_1-\norm{\Delta(t_1)}_1\nonumber\\
=&\,|\vec{v}_1(t_1)+(p_1-p_2)\vec{a}|-|\vec{v}_1(t_1)|~,
\end{align}
which shows that the trace norm attains its maximal value 
\begin{align}
 \norm{\Delta(T)}_1-\norm{\Delta(t_1)}_1=r|\vec{a}|~,
\end{align}
if $\vec{v}_1$ is parallel to $\vec{a}$, i.e. $\vec{v}_1=c\,\vec{a}$ for some $c>0$, and $p_1-p_2=r$. Obviously, this is satisfied by the probability distribution $\{p_i\}=\{p_\pm=\tfrac{1}{2}(1\pm r)\}$. Unlike the original definition, the generalized trace distance based measure is thus able to capture non-Markovian dynamics arising from uniform translations contained in the dynamical process.

The dynamics generated by $\mathcal{K}_t$ for $0\leq t\leq t_1$ (cf. Eq.\,\eqref{eq:ExpGen1}) for any $t_1>0$ provides also an example for a process which is not CP- but P-divisible. Choosing $\gamma_1(t)=\gamma_2(t)=1$ and $\gamma_3(t)=-\tanh(t)$ as proposed in Ref. \cite{Hall}, the dynamical process is not CP-divisible according to theorem \ref{the:Divisibility}. In particular, there exists even no single interval for which CP-divisibility is restored as $\gamma_3(t)<0$ for all $t>0$. However, the dynamics is always P-divisible since
\begin{equation}
 \sum_{i=1}^2|\langle m|\sigma_i|n\rangle|^2-\tanh(t)|\langle m|\sigma_3|n\rangle|^2\geq0~,
\end{equation}
is valid for all $0\leq t\leq t_1$ and $m\neq n$ which is condition \eqref{eq:CondPdiv}. Hence, the trace distance based measure and its generalization are equal to zero in this case while measures for non-Markovianity relying on CP-divisibility \cite{RHP,Hall,ChruscinskiRivas} are, of course, nonvanishing. This random unitary evolution \cite{BasanoRandU,ChruRandomU} illustrates the persisting and significant difference between the two major approaches for the characterization of quantum non-Markovianity. 

\section{Conclusions and outlook}
\label{sec:conclusions-outlook}
We have introduced a generalization of the criterion of quantum non-Markovianity based on the flow of information. This novel characterization relies on the trace norm of Helstrom matrices which can also be interpreted as a measure for the distinguishability. By virtue of this property, the generalized measure still admits an interpretation as quantifier of an information backflow from the environment to the open quantum system.

It is shown that the generalized criterion is equivalent to P-divisibility of the dynamical process which has an explicit connection to classical Markovian stochastic processes. That is, any rate equation obtained from a quantum master equation of a P-divisible process can be interpreted as Pauli master equation of a classical Markov process. However, the presented approach is more general since it can also be applied even when the notion of divisibility is ill-defined and can be experimentally tested. 

The experimental determination is substantially simplified by the derived mathematical representations of the generalized measure which are similar to those for the original definition. First, we have demonstrated that optimal initial states for non-Markovian dynamics must be orthogonal and, based on this result, we could finally establish a local representation for the measure. Hence, orthogonal states, corresponding to a maximal information content, exhibit maximal memory effects which can be revealed locally and anywhere in the quantum state space as provided by the local representation.

An essential feature of the generalized approach to non-Markovianity in comparison with the original definition is its sensitivity to memory effects arising from uniform translations of states. To illustrate this, we constructed a dynamical process for a two-level system which comprises a uniform translation of the Bloch sphere. However, there exist dynamical processes which are not CP-divisible but P-divisible as we show in our second example which manifests the existing difference of the generalized definition with other approaches to non-Markovianity.

We believe that the definition of non-Markovianity given in this paper is of great relevance for the study of memory effects in the field of complex quantum systems and quantum information due to the experimental accessibility and its clear-cut interpretation and connection to classical Markov processes.

\acknowledgments HPB and BV acknowledge support from the project EU STREP PROACTIVE H2020 QuProCS (Grant Agreement 641277). BV also acknowledges support by the COST Action MP1006 Fundamental Problems in Quantum Physics and by UniMI through the H2020 Transition Grant 14-6-3008000-623.  SW thanks the German National Academic Foundation for support.

\appendix
\section{Orthonormal bases and the maximally mixed state}\label{app:maxMix}
We prove the statement made in the main text (see Sec. \ref{sec:conn-class-mark}) about the relation between the maximally mixed state and the eigenbases of the set of time-evolved states $\mathrm{Im}\Phi_t=\{\Phi_t(\rho)|\rho\in\mathcal{S}(\mathcal{H})\}$ for two-level systems, i.e. $\mathcal{H}=\mathbb{C}^2$. Note that $\mathrm{Im}\Phi_t$ defines a nonempty, convex and compact set for any finite-dimensional Hilbert space due to linearity of $\Phi_t$ along with compactness and convexity of the state space.

\begin{mylem}
Any orthonormal basis $\{|\Psi_i\rangle\}$ of $\mathcal{H}$ defines the eigenbasis of a quantum state $\rho(t)\in\mathrm{Im}\Phi_t$ if and only if $\tfrac{1}{2}\mathbb{1}_2\in\mathrm{Im}\Phi_t$.
\end{mylem}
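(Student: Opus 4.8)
The plan is to translate everything into the Bloch representation. For $\mathcal{H}=\mathbb{C}^2$ the state space is the ball $B_1=\{\vec v\,|\,|\vec v|\le 1\}$ via $\rho=\tfrac12(\mathbb{1}_2+\vec v\cdot\vec\sigma)$, and, as already noted in the appendix preamble, $\mathrm{Im}\Phi_t$ is a nonempty, convex and compact set; I would encode it as the corresponding set $V\subseteq B_1\subseteq\mathbb{R}^3$ of Bloch vectors. Since the maximally mixed state $\tfrac12\mathbb{1}_2$ corresponds to $\vec v=0$, the assertion $\tfrac12\mathbb{1}_2\in\mathrm{Im}\Phi_t$ becomes the geometric statement $0\in V$.

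Next I would pin down the correspondence between orthonormal bases and eigenbases. A state with Bloch vector $\vec v\neq0$ is nondegenerate and its eigenvectors are exactly the two antipodal pure states along the axis $\pm\hat v$ with $\hat v=\vec v/|\vec v|$; the maximally mixed state ($\vec v=0$) is the degenerate exception for which \emph{every} orthonormal basis is an eigenbasis. Conversely, an orthonormal basis $\{|\Psi_i\rangle\}$ is specified (up to ordering and phases) by an axis direction $\hat n\in S^2$, i.e. by the line $\mathbb{R}\hat n$ through the origin, as the eigenvectors of $\hat n\cdot\vec\sigma$. Hence the basis determined by $\hat n$ is the eigenbasis of some $\rho(t)\in\mathrm{Im}\Phi_t$ if and only if $V$ contains a point of the line $\mathbb{R}\hat n$. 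The lemma thus reduces to the purely geometric equivalence: every line through the origin meets $V$ if and only if $0\in V$.

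The implication $0\in V\Rightarrow$ (every line meets $V$) is immediate, since every line through the origin contains $0$; this is the easy direction and merely restates that $\tfrac12\mathbb{1}_2$ admits every orthonormal basis as an eigenbasis. For the converse I would argue by contradiction via the separating hyperplane theorem. If $0\notin V$, then because $V$ is convex and compact there exist a unit vector $\hat u$ and a number $\delta>0$ with $\langle\hat u,\vec v\rangle\ge\delta$ for all $\vec v\in V$ (positivity of $\delta$ following from attainment of the minimum on a compact set). Choosing any direction $\hat n\in S^2$ with $\langle\hat u,\hat n\rangle=0$, every point $s\hat n$ of the line $\mathbb{R}\hat n$ satisfies $\langle\hat u,s\hat n\rangle=0<\delta$ and so lies outside $V$; thus this line misses $V$ entirely, contradicting the hypothesis. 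Therefore $0\in V$, i.e. $\tfrac12\mathbb{1}_2\in\mathrm{Im}\Phi_t$.

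The step that requires the most care is the geometric converse together with the preceding, clean identification of orthonormal bases with lines through the origin, in particular the correct handling of the degenerate case $\vec v=0$, which is precisely where the maximally mixed state enters. The separation argument itself is short; the real content is recognizing that convexity plus compactness of $\mathrm{Im}\Phi_t$ is exactly what forces the ``every line is covered'' property to collapse to membership of the centre, and that this equivalence is special to the qubit geometry, where orthonormal bases are in bijection with lines through the origin of the Bloch ball.
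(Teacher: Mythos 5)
Your proof is correct and follows essentially the same route as the paper's: both translate the problem to the Bloch picture, exploit convexity and compactness of $\mathrm{Im}\Phi_t$, and apply a separating-hyperplane (Hahn--Banach) argument to show that if $\tfrac{1}{2}\mathbb{1}_2\notin\mathrm{Im}\Phi_t$ then any basis axis lying in the separating plane through the origin yields a line that misses $\mathrm{Im}\Phi_t$. The only differences are presentational: you reduce the statement first to the purely geometric equivalence (every line through the origin meets $V$ iff $0\in V$) and separate directly in $\mathbb{R}^3$, whereas the paper separates in operator space via a functional $\varphi_A$ with traceless $A$ before passing to the Bloch ball.
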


\begin{proof}
The 'if' statement is clear from the fact that any orthonormal basis defines a resolution of identity. Hence, if $\tfrac{1}{2}\mathbb{1}_2\in\mathrm{Im}\Phi_t$ then any basis $\{|\Psi_i\rangle\}$ defines at least the eigenbasis of the maximally mixed state proving the claim. It is clear that the same reasoning actually applies to any finite-dimensional Hilbert space.

To show the reverse we employ the Hahn-Banach separation theorem for normed vector spaces. Suppose that $\tfrac{1}{2}\mathbb{1}_2\notin\mathrm{Im}\Phi_t$. Then there exists a real-valued, linear and continuous functional $\varphi_A$ separating the disjoint nonempty, convex and compact sets $\mathrm{Im}\Phi_t$ and $\{\tfrac{1}{2}\mathbb{1}_2\}$. That is, we have
\begin{align}
 \varphi_A(\tfrac{1}{2}\mathbb{1}_2)<\mathrm{inf}\{\varphi_A\bigl(\rho(t)\bigr)|\,\rho(t)\in\mathrm{Im}\Phi_t\}~,
\end{align}
where $\varphi_A(X)=\mathrm{Tr}\{AX\}$ for some Hermitian operator $A$ due to Riesz representation theorem. By means of the transformation $A\mapsto A-\mathrm{Tr}A$ one may also assume that the operator $A$ is traceless. The set $\{X|X=X^\dag,\mathrm{Tr}\{X\}=1,\varphi_A(X)=0\}$ thus describes a hyperplane which intersects the maximally mixed state and separates it from $\mathrm{Im}\Phi_t$. 

Employing the Bloch representation this hyperplane defines an ordinary plane in $\mathbb{R}^3$ which contains the origin and therefore intersects the surface of the Bloch sphere in a unit circle. As pure, orthogonal states on $\mathbb{C}^2$ correspond to antipodal points on the surface of the Bloch sphere we thus conclude that the (hyper)plane contains an uncountable set of orthonormal bases which do not define eigenbases of quantum states $\rho(t)\in\mathrm{Im}\Phi_t$ (cf. Fig. \ref{fig:Hyperplane}). 

\end{proof}

\begin{figure}[t]
    \centering
     \includegraphics[width=4.5cm]{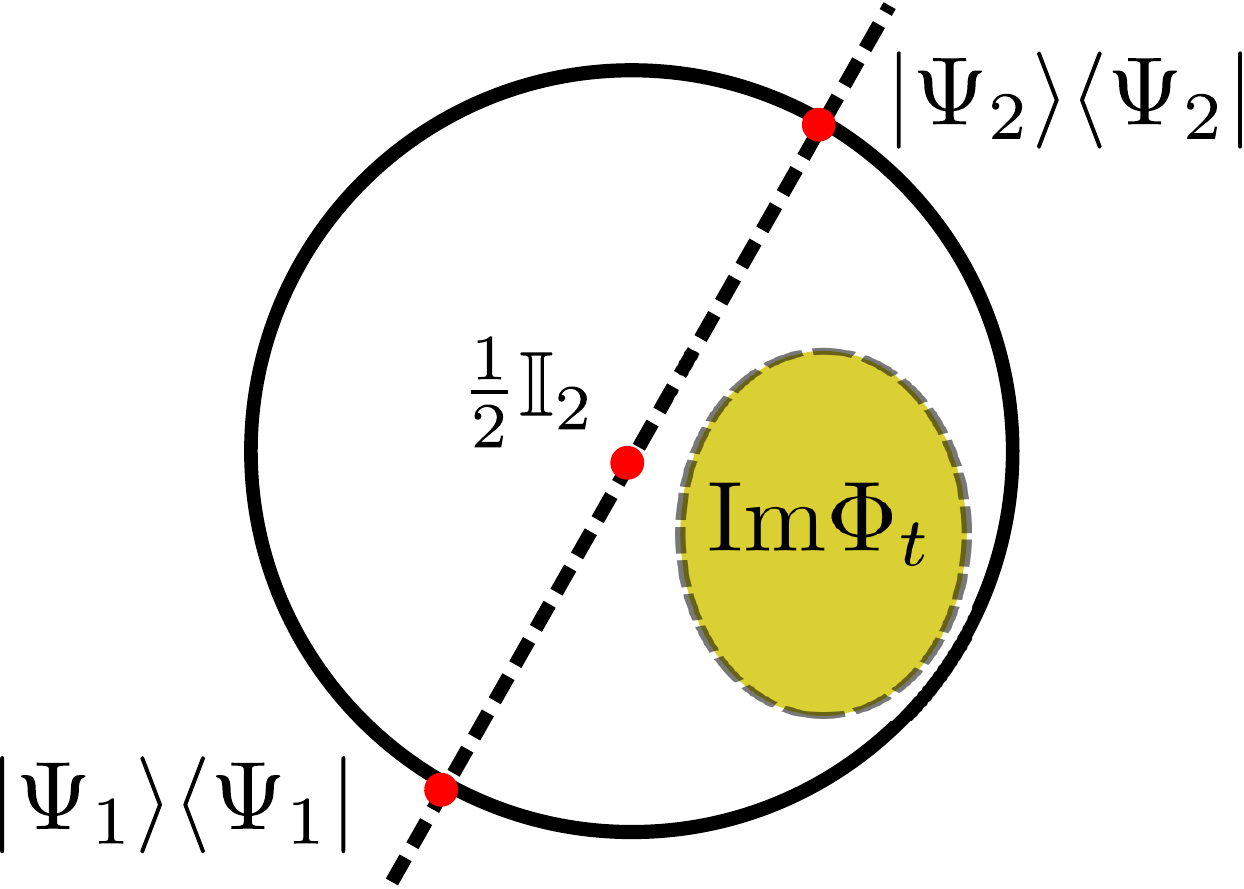} 
     \caption{Two-dimensional cut of a hyperplane in the Bloch representation separating the two disjoint nonempty, convex and compact sets $\{\tfrac{1}{2}\mathbb{1}_2\}$ and $\mathrm{Im}\Phi_t$. As the plane intersects the maximally mixed state its intersection with the state space comprise an uncountable set of pure and orthogonal states defining orthonormal bases of $\mathbb{C}^2$.}\label{fig:Hyperplane}
\end{figure}

\section{Characterizations of enclosing surfaces }\label{app:enclSurf}
In this section we show that the relation for enclosing surfaces in terms of nonzero, Hermitian and indefinite operators derived in lemma \ref{lem:equivS} is equivalent to the original definition, i.e.

\begin{mylem}\label{lem:equivSA}
 A set $\partial U(\rho)$ defines an enclosing surface  of $\rho$ if and only if for any nonzero, Hermitian, indefinite operator $X$ there exist a real number $\lambda>0$ with $\lambda|\mathrm{Tr}X|<1$ such that 
\begin{align}\label{eq:defNEnclA}
 \frac{1}{p_-}(p_+\rho-\mathrm{sgn}(\mathrm{Tr}X)\lambda X)\in\partial U(\rho)~,
\end{align}
where~ $p_\pm\equiv\frac{1}{2}(1\pm\lambda|\mathrm{Tr}X|)$ and
\begin{equation}
\mathrm{sgn}(x)=\begin{cases}
                -1~,&~\text{if}~x\leq0\\ +1~,&~\text{else}
               \end{cases}
\end{equation}
\end{mylem}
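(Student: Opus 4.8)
The plan is to prove the two implications separately, observing that one of them is already available. The implication from an enclosing surface in the sense of Eq.~\eqref{eq:defEncl} to the characterization \eqref{eq:defNEnclA} is precisely the content of lemma \ref{lem:equivS}, which was established above for every nonzero, Hermitian, indefinite operator $X$. Hence nothing new is needed in that direction, and the entire task reduces to the converse: assuming property \eqref{eq:defNEnclA} holds for every indefinite $X$, I must recover the defining relation \eqref{eq:defEncl} for every nonzero, Hermitian and traceless operator $Y$.

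The key observation enabling the converse is that every nonzero, Hermitian, traceless operator is automatically indefinite. Its eigenvalues are real and sum to zero; since it is nonzero they cannot all vanish, so there must be at least one strictly positive and at least one strictly negative eigenvalue. This legitimizes feeding such a $Y$ into the assumed property by taking $X=Y$.

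First I would specialize \eqref{eq:defNEnclA} to $X=Y$. Because $\mathrm{Tr}Y=0$, the admissibility constraint $\lambda|\mathrm{Tr}Y|<1$ holds for every $\lambda>0$, the weights collapse to $p_\pm=\tfrac{1}{2}(1\pm0)=\tfrac{1}{2}$, and the stated convention gives $\mathrm{sgn}(\mathrm{Tr}Y)=\mathrm{sgn}(0)=-1$. Substituting these values, the left-hand side of \eqref{eq:defNEnclA} reduces to
\begin{equation}
 \frac{1}{p_-}\bigl(p_+\rho-\mathrm{sgn}(\mathrm{Tr}Y)\lambda Y\bigr)=2\bigl(\tfrac{1}{2}\rho+\lambda Y\bigr)=\rho+2\lambda Y~.
\end{equation}
The assumed property therefore yields a $\lambda>0$ with $\rho+2\lambda Y\in\partial U(\rho)$; setting $\mu=\lambda$ this is exactly Eq.~\eqref{eq:defEncl}, so $\partial U(\rho)$ is an enclosing surface.

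I expect no genuine obstacle here: the containment $\partial U(\rho)\subset\mathcal{S}(\mathcal{H})$ and $\rho\notin\partial U(\rho)$ are part of the standing setup and are untouched, so the argument amounts to specializing the indefinite-operator characterization to the traceless case. The only points requiring care are the elementary spectral fact that a nonzero traceless Hermitian operator is indefinite, and the bookkeeping ensuring that the $\mathrm{sgn}(0)=-1$ convention collapses the formula cleanly to $\rho+2\lambda Y$ rather than $\rho-2\lambda Y$; since $Y$ and $-Y$ range over the same set of traceless operators, even a sign slip would leave the conclusion intact.
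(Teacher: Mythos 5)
Your proof is correct and takes essentially the same route as the paper: the paper's converse argument introduces the map $\Theta_\rho(X)=\mathrm{sgn}(\mathrm{Tr}X)\bigl[(\mathrm{Tr}X)\rho-X\bigr]$ and rests on exactly your key observation, namely that any nonzero, traceless, Hermitian $Y$ is automatically indefinite and satisfies $\Theta_\rho(Y)=Y$, so that the hypothesis applied to $X=Y$ yields $\rho+2\mu Y\in\partial U(\rho)$. Your version merely skips the paper's intermediate computation for general indefinite $X$ (which the converse direction does not actually need) and specializes directly to the traceless case.
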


\begin{proof}
That any enclosing surface obeys a characterization in terms of nonzero, Hermitian and indefinite operators has already been proven in lemma \ref{lem:equivS}. Conversely, suppose the states in $\partial U(\rho)$ are characterized by Eq.\,\eqref{eq:defNEnclA}. Hence, for a nonzero, indefinite and Hermitian operator $X$ there exists a real number $\lambda>0$ with $\lambda|\mathrm{Tr}X|<1$ such that
\begin{equation}
 \varrho=\frac{1}{p_-}\left(p_+\rho-\mathrm{sgn}(\mathrm{Tr}X)\lambda X\right)\in\partial U(\rho)
\end{equation}
where $p_\pm=\tfrac{1}{2}(1\pm\lambda|\mathrm{Tr}X|)$. Now, consider the map 
\begin{equation}\label{eq:Theta}
 \Theta_{\rho}(X)\equiv\mathrm{sgn}(\mathrm{Tr}X)\bigl[(\mathrm{Tr}X)\rho-X\bigr]
\end{equation}
defined on the set of nonzero, Hermitian and indefinite operators. The operator $Y=\Theta_\rho(X)$ represents a traceless, Hermitian operator and $Y=0$ if and only if $(\mathrm{Tr}X)\rho=X$ contradicting that $X$ is indefinite. Thus, we have $Y\neq0$ and one finds
\begin{align}
\varrho&=\frac{1}{p_-}\left\{p_+\rho-\lambda\bigl[|\mathrm{Tr}X|\rho-Y\bigr]\right\}\nonumber\\
&=\frac{p_+-\lambda|\mathrm{Tr}X|}{p_-}\rho+\frac{\lambda}{p_-}Y\nonumber\\
&=\rho+2\mu Y
\end{align}
where $\mu\equiv\lambda/(2p_-)>0$. It remains to show that $\Theta_\rho$ defines a surjection on the set of nonzero, Hermitian and traceless operators. This is obviously true as any traceless, nonzero, Hermitian operator $Y$ is necessarily indefinite and we have $\Theta_\rho(Y)=Y$. Hence, the set $\partial U(\rho)$ is indeed an enclosing surface.
\end{proof}

\end{document}